\definecolor{DarkBlue}{rgb}{0.1,0.1,0.5}
\definecolor{Red}{rgb}{0.9,0.0,0.1}
\definecolor{Yellow}{rgb}{1,1,0}
\definecolor{Orange}{rgb}{1,0.5,0.25}
\definecolor{DarkOrange}{rgb}{1,0.5,0}
\definecolor{Violent}{rgb}{0.5,0,1}
\definecolor{DarkPink}{rgb}{1,0,1}
\definecolor{DarkGreen}{rgb}{0.5,0.8,0.5}
\newtheorem{definition}{Definition}[section]
\newtheorem{lemma}{Lemma}[section]
\newtheorem{theorem}{Theorem}[section]
\newtheorem{assumption}{Assumption}[section]
\theoremstyle{definition}
\newtheorem{example}{Example}[section]
\definecolor{szary1}{gray}{0.85}
\definecolor{szary2}{gray}{0.5}
\renewcommand{\Re}{\text{Re}}
\newcommand{\ciagn}[1]{\{#1 : t\in \mathbb{N}\}}
\newcommand{\cosec}{\text{cosec}}
\newcommand{\ciag}[1]{\{#1 : t\in \mathbb{Z}\}}
\newcommand{\ma}{$2\!\times\!12\text{MA }$}
\def\numberedfootnote{%
  \global\advance\footnotenumber by 1
  \footnote{$^{1}$}
  }
\newtheoremstyle{Satz}
  {1cm}                   
  {1cm}                   
  {\sffamily}           
  {}                      
  {\normalfont\bfseries}  
  {:}                     
  {\newline}              
  {\underline{\thmname{#1}\thmnumber{ #2}\thmnote{ (#3)}}}
\theoremstyle{Satz}
\title{Almost Periodically Correlated Time Series in Business Fluctuations Analysis\footnote{$\quad$ Paper presented at FENS 2012, Symposium on Econo- and Sociophysics, 19-21.04.2012, University of Gdansk and Gdansk University of Technology,  Poland}}
\author {\L{}ukasz Lenart\thanks{$\quad$ Economic Institute in National Bank of Poland, Department of Mathematics in Cracow University of Economics, e-mail: lukasz.lenart@uek.krakow.pl}, Mateusz Pipie\'{n}\thanks{$\quad$ Economic Institute in National Bank of Poland, Department of Econometrics and Operations Research in Cracow University of Economics, e-mail: eepipien@cyf-kr.edu.pl}}
\begin{document}
\maketitle

\begin{abstract}
We propose a non-standard subsampling procedure to make formal statistical inference about the business cycle, one of the most important unobserved feature characterising fluctuations of economic growth. We show that some characteristics of business cycle can be modelled in a non-parametric way by discrete spectrum of the Almost Periodically Correlated (APC) time series. On the basis of estimated characteristics of this spectrum business cycle is extracted by filtering. As an illustration we characterise the man properties of business cycles in industrial production index for Polish economy.
\\ \textbf{Keywords:}  business cycle, industrial production index, almost periodically
correlated time series, subsampling procedure.
\end{abstract}
\section{Introduction}
Seminal works, that originated interest in empirical modelling of business cycles in macroeconomy, clearly indicated theirs inhomogeneity for both, spatial and time domain. In particular, introductory remarks of W.C. Mitchell in \emph{Business Annals}, \cite{annals}, contains the following suggestion: \emph{No two recurrences in all the array seem precisely alike. Business cycles differ in their duration as wholes and in the quickness and the uniformity with which they sweep from one country to another}. When identifying business cycle R.E. Lucas proposed its own conception, which, as he states in his 1977 paper, identifies the business cycle with \emph{movements about trend in gross national income}. These movements are typically irregular in period and in amplitude. Regularities are only observed \emph{in the co-movements among different aggregative time series}; see \cite{lucas1977}. It is clear, that Mitchell initially suggested different time pattern of business cycles for different economies. However, it is obvious, that from the dynamic point of view, as Lucas states, business cycle exhibit irregular and nonperiodic character.
\\For developed economies some stylised facts about business cycles are known in the literature; see \cite{king_rebelo} or \cite{Stock_Watson}. But, we see the lack of precise and well established methods of formal statistical modelling of those empirical properties. It prompts new studies resulting many different approaches and frameworks of business cycle extraction; see for example exhaustive review presented \cite{Diebold_Rudenbush}. When the lack of the theory of statistical inference seems to be a persistent state, the consensus about empirical properties of business cycles is based, either on an ad-hoc reasoning, or on the empirical results, that are possible to confirm using a group of methods, built on the basis of relative different frameworks. However, the extraction of the business cycle component from observed time series is still a controversial issue. In particular, since there is ongoing interest in many approaches to separate growth component from the cyclical component, and because there is no consensus on how to detrend the data, the business cycle stylised facts are sensitive to the adopted procedure. Hence, this has become not only a controversial issue in the business cycle theory itself, but also a subject of criticism by competing empirical approaches, as well.
\\The main purpose of this paper is to present a novel approach to formal business cycle estimation. We propose a non-standard subsampling procedure, in order to make formal statistical inference about the properties of the business cycle. We show, that business cycle can be modelled by parameters of discrete spectra of the Almost Periodically Correlated (APC) stochastic process. The APC class is a generalisation of Periodically Correlated (PC) class of stochastic processes, introduced by \citet{gladyshev1}. The vast literature confirmed substantial empirical importance and flexibility of PC class in many time series applications, see: \cite{parzen}, \cite{osborn}, \cite{franses_96}, \cite{franses_96_book}, \cite{Bollerslev_96}, \cite{franses_97}, \cite{Burridge_01}, \cite{franses_05}. According to \cite{hurd_miamee}, the periodically correlated stochastic processes are nonstationary, but non-constant unconditional expectation of the process exhibit periodic, and hence regular, evolution in time domain. The generalisation presented in this paper assumes that the mean of the nonstationary time series can be described by almost periodic function, i.e. the function, that belongs to the topological closure of periodic class of functions.
\\From the definition, APC stochastic processes may describe irregular character of unconditional means for nonstationary time series. Assuming, that detrended time series follows APC, we relax assumption of stationarity of cyclical factor, very commonly imposed in filtering approaches. Nonstationarity of the cycle component of the series, together with possible irregularities in time pattern of the unconditional mean, makes our approach relatively flexible and general. Consequently, incorporating the APC factor into the model of observed discrete time series should result in much more accurate approach to business cycle extraction than those proposed so far.
\\In the empirical part of the paper we analyse the cyclical behaviour of production sector in Poland with the use of the model with APC stochastic component. We characterise business cycle on the basis of industrial production index and also on some subsector indices. We discuss the empirical results and reflect them to the previous analyses conducted for the Polish economy.
\section{Basic notation and definitions}
Formally, the second-order and real-valued time series $\{X_{t}: t\in
\mathbb{Z}\}$ is called periodically correlated if both the mean function $\mu(t)=E(X_{t})$ and the autocovariance function
$B(t,\tau)=\text{cov}(X_{t},X_{t+\tau})$ are periodic at $t$ for every $\tau \in
\mathbb{Z}$, with the same period $T$. In order to introduce the class of almost periodically
correlated time series we need the definition of almost periodic function. We recall
the following definition from \cite{Cor}:
\begin{definition}
A real-valued function $f(t): \mathbb{Z} \longrightarrow  \mathbb{R} $ of an integer
variable is called almost periodic (AP in short), if for any $\epsilon>0$ there
exists an integer $L_{ \epsilon}>0$, such that among any $L_{ \epsilon}$ consecutive
integers, there is an integer  $p_{\epsilon}$ with the property
$$\sup_{t \in \mathbb{Z}}|f(t+p_{ \epsilon})-f(t)|< \epsilon.$$
\end{definition}
A second-order real-valued time series $\ciag{X_{t}}$ is called almost
periodically correlated if both the mean function $\mu(t)=E(X_{t})$ and the autocovariance function $B(t,\tau)=\text{cov}(X_{t},X_{t+\tau})$ are
almost periodic function of an integer variable, for every $\tau \in \mathbb{Z}$. It
is easy to see that any periodic function is almost periodic. Therefore, the class of
APC time series is wider than the class of PC time series. During last five decades the APC class was broadly applied in telecommunication (\cite{gardner86}, \cite{Nap01}), climatology \cite{Blo_Hurd_Lund_94}) and many others fields. For exhaustive review of possible applications see \cite{gardner2} and \cite{giannakis_05}). Empirical importance of such a class of nonstationary time series prompted new studies concerning properties and estimation methods.
\\In APC case the mean function and the autocovariance function $B(t,\tau)$ for any $\tau \in \mathbb{Z}$ has the Fourier representation of the form:
\begin{equation}
\mu (t) \sim \sum\limits_{\psi \in \Psi } m(\psi)e^{i \psi t},\:\;\:\:\:
B(t,\tau) \sim \sum\limits_{\lambda \in \Lambda_{\tau}}
a(\lambda,\tau)e^{i \lambda t}, \label{rep1}
\end{equation}
where the Fourier coefficients  $m(\psi)$ and $a(\lambda,\tau)$  are given by:
\begin{equation}
m(\psi)=\lim\limits_{n \rightarrow \infty}\frac{1}{n}\sum\limits_{t=1}^{n}\mu(t)e^{-i \psi t},\:\;\;\;
a(\lambda,\tau)=\lim\limits_{n \rightarrow \infty}\frac{1}{n}\sum\limits_{j=1}^{n}B(j,\tau)e^{-i \lambda j},
\end{equation}
see \cite{Hur_cor_91}, \cite{hurd_miamee}). According to \cite{Cor} sets $\Psi=\{\psi \in [0, 2\pi): m_{X}(\psi)\not =0 \}$ and $\Lambda_{\tau}=
\{\lambda \in [0,2\pi): a(\lambda,\tau)\not = 0 \}$ are countable.
Hence, the set $\Lambda = \bigcup_{\tau \in \mathbb{Z}} \Lambda_{\tau}$ is also
countable. If the time series is PC, then representations
(\ref{rep1}) become equations and the sets $\Psi$ and $\Lambda$
are contained in the set $\{ 2 k\pi/T:k=0,1,\ldots,T-1 \}$.
\\In the problem of business cycles extraction the vast econometric literature exploit approaches based on the assumption of zero mean imposed on the distribution of stochastic factor describing business fluctuations. Moreover, this stochastic factor is usually modelled under stationarity assumption, leading to the framework that utilises parameters of continuous spectrum. The econometric approach presented in this paper relaxes stationary assumption, and consequently a more general dynamic model of observed time series is subject to empirical analysis. We model business cycles in a non-parametric way, taking into account discrete spectra of observed time series.  It means that we characterise business cycles by non-zero frequencies $\psi \in \Psi$ and by corresponding Fourier coefficients $m(\psi)$. The definition and properties of discrete spectra in
simple representation see \cite{priestley}, or in PC case in \cite{hurd_miamee}.
\\Notice that any $\psi_0 \in (0,2\pi)$ corresponds to the length of the cycle $2 \pi / \psi_0$. Hence the following testing problem:
\begin{equation}
\begin{array}{ll}
H_0: &  \psi_0 \not \in \Psi\\
H_1: & \psi_0 \in \Psi,
\end{array}
\label{hipotezaa}
\end{equation}
enables to test the statistical significance of the cycle with appropriate length. According to the definition of the set $\Psi$ and Fourier coefficients $m(\cdot)$ our testing problem is equivalent to the following:
\begin{equation}
\begin{array}{ll}
H_0: &  |m(\psi_0)| =0 \\
H_1: & |m(\psi_0)| \not = 0.
\end{array}
\label{hipotezab}
\end{equation}
We consider formulation \ref{hipotezab} in details. Since we are interested in business cycle estimation we restrict frequency $\psi_0$ such that corresponding length of the cycle is not shorter then 1.5 years. This formally means, that in further analysis for monthly data we assume that $\psi_0 \in (0,0.35)$.
\\In this paper by amplitude, which corresponds to frequency $\psi \in \Psi  \cap (0,0.35)$, we define the distance between maximum and minimum value of the function $h(t)=2\Re[m(\psi) e^{i \psi t}]$.
\\The problem stated above requires statistical theory of detecting significant frequencies in the set $\Psi$ and corresponding Fourier coefficients $m(\cdot)$. In the next sections we present some results given sampling model generated by APC assumption.
\section{Estimation problem}
By $\{ X_{c_n+1},X_{c_n+2}, \ldots, X_{c_n+d_n}\}$ we denote a sample from APC time series $\ciag{X_{t}}$, where  $\{d_n\}_{n \in \mathbb{N}}$ is any sequence of integers tending to infinity with $n$ and  $\{c_{n}\}_{n \in \mathbb{N}}$ is any sequence of integers. For any $\psi \in [0,2\pi)$, estimator $\hat
m_{n}^{c_n,d_n}(\psi)$ ($\hat m_{n}^{c,d}(\psi)$ for short) of the parameter
$m(\psi)$, in representation (\ref{rep1}), based on sample  $\{ X_{c_n+1},X_{c_n+2},
\ldots, X_{c_n+d_n}\}$ takes the form:
\begin{equation}
\hat m_{n}^{c,d}(\psi) = \frac{1}{d_n}\sum\limits_{j=c_n+1}^{c_n+d_n}X_{j}e^{-i \psi j}.
\label{miu}
\end{equation}
The standardised version of (\ref{miu}) has asymptotic normal distribution with zero mean; see \cite{Lenart_11_jtsa}, Theorem 2.1. Additionally, the variance of this distribution is a function of values of the generalised spectral density, calculated at arguments dependent on $\psi$. For definition and basic properties of generalized spectral density in APC case see \cite{hurd_rep}, \cite{Deh_hurd_93}). Since the standard theory, presented in \cite{Lenart_11_bernoulli}, provides methods of estimation of generalised spectral density in APC case only under the zero-mean assumption or under the assumption that the set $\Psi$ is known and finite, estimation of generalised spectral density in our case is not possible so far. Therefore, in the paper we exploit subsampling methodology, to construct asymptotically consistent test related to (\ref{hipotezab}). In this approach the asymptotic variance estimation is not of particular interest. Similarly, subsampling methodology was also used for PC case in time domain in \cite{Len08b} and for APC case in frequency domain in \cite{Lenart_11_bernoulli}.
\\The problem of frequency estimation $\psi_0$ can be solved on the basis of a more generalised approach than presented by \cite{walker_71}. Given assumption that there exists interval $I_{\psi_0}$, such that $I_{\psi_0} \cap \Psi = \{\psi_0\}$, it is possible to formulate the natural estimator of the unknown frequency $\psi_0$ of the form $\hat \psi_{n} = \arg \, \max_{x \in I_{\psi_0}}\{ \sqrt{n}|\hat m_{n}(x)|\}.$ As it was shown in \cite{Lenart_11_jtsa}, Theorem 3.1, under some regularity conditions we have:
\begin{equation}
\ \left[
\begin{array}{c}
 \hat m_{n}^{c,d}(\hat \psi_{n})\\
 \hat \psi_{n}
  \end{array}
\right] \stackrel{p}{\longrightarrow } \left[
\begin{array}{c}
m(\psi_0)\\
\psi_0
\end{array}
\right].
\label{eqww}
\end{equation}
\section{Subsampling procedure and consistency}
In this section  we describe the main idea of the subsampling methodology, according to the approach presented and developed by \cite{Pol}. We use the same notation. Initially we assume that the time series $\ciag{X_{t}}$ is governed by unknown probability distribution $P$, that belongs to a certain class of probability measures $\mathcal{P}$. Denote by $\{X_1,X_2,\ldots,X_n\}$ a sample from the time series $\ciag{X_{t}}$. Our goal is to approximate the
distribution of
\begin{equation}
\upsilon_{n}( \hat \theta_{n} - \theta(P)),
\label{root}
\end{equation}
where $\hat \theta_{n}=\hat \theta_{n}(X_1,X_2,\ldots,X_n)$ is an estimator of
$\theta(P)$, the parameter of interests, and $\upsilon_{n}$ is appropriate
normalising sequence.  Let $b(n)$  ($b$ for short) be any sequence of integer numbers
tending to infinity with $n$, such that $b<n$ and $b/n \to 0$.
\\One of the main assumption in subsampling methodology is that there exists asymptotic distribution of (\ref{root}). We denote this distribution by  $J(P)$, with $J(x,P)$ as a corresponding cumulative distribution functions at point $x \in \mathbb{R}$. Following the idea of \cite{Pol} the distribution of (\ref{root}) can be approximated by its subsampling version of the form:
\begin{equation}
L_{n,b}(x)=\frac{1}{n-b+1}\sum\limits_{t=1}^{n-b+1}\mathbf{1} \{ \upsilon_{b}(
\hat \theta_{n,b,t} - \hat \theta_{n}) \leq x \},
\label{esty_podprobkowy}
\end{equation}
where $\mathbf{1}{\{B\}}$ is the indicator function of the set $B$ and $\hat{\theta}_{n,b,t}=\hat{\theta}_b(X_t,X_{t+1},\ldots,X_{t+b-1})$ as an estimator of the unknown parameter $\theta(P)$ obtained on the basis of the sample $\{X_t,X_{t+1},\ldots,X_{t+b-1}\}$, with $t$ as a starting point and $b$ as a size of subsample.
Under suitable regularity conditions  it is known that (see \cite{Pol}, Theorem 4.2.1, page 103):
\begin{itemize}
\item [(i)]if $x$ is a continuity point of $J(\cdot,P)$, then $L_{n,b}(x) \stackrel{p}{\longrightarrow } J(x,P),$
\item [(ii)]if $J(\cdot,P)$ is continuous, then $\sup_{x \in
    \mathbb{R}}|L_{n,b}(x) - J(x,P)|\stackrel{p}{\longrightarrow }0,$
\item [(iii)]If $J(\cdot,P)$ is continuous at point $c(1-\alpha)$, then
\begin{equation}
P\left(\upsilon_{n}(\hat \theta_{n}-\theta(P))\leq
c_{n,b}(1-\alpha)\right) \to 1-\alpha,
\end{equation}
where for any $\alpha \in (0,1)$, we define
\begin{equation}
c_{n,b}(1-\alpha) =\inf \{x:\,L_{n,b}(x) \geq  1-\alpha\},
\nonumber
\end{equation}
\begin{equation}
c(1-\alpha) =\inf \{x:\,J(x,P) \geq  1-\alpha \}.
\nonumber
\end{equation}
\end{itemize}
The implication (iii) is crucial to construct a subsampling confidence interval for the parameter $\theta(P)$.
\\We are interested in estimation of the absolute value of coefficients of the Fourier representation of the mean of APC process. Namely, we take  $\theta(P)=|m(\psi)|$. Subsampling procedure, with $\hat \theta_{n,b,t}=|m_{n}^{t-1,b}(\psi)|$ and $\upsilon_{n}=\sqrt{n}$, is consistent; see for details \cite{Lenart_11_jtsa}, Theorem 2.3. Consequently, the confidence intervals for the parameter $\theta(P)=|m(\psi)|$, obtained by subsampling procedure, are asymptotically consistent.
\\Now let take any $\psi_0 \in (0,\pi]$. The test (\ref{hipotezab}) with test statistics $\Pi_{n}(\{\psi\})=\sqrt{n}| \hat m_{n}(\psi)|$ and subsampling critical value are asymptotically consistent.
In our paper we prove some modification of this result (see Theorem \ref{subsamplingp-mean} in Appendix).
We use test statistics $\tilde \Pi_{n}(\{\psi\}) = \sqrt{n}| \hat r_{n}(\psi)|$, that can be interpreted as a value of test statistics $\Pi_{n}(\{\psi\})=\sqrt{n}| \hat m_{n}(\psi)|$ based on the sample $\{X_{1}-\overline{\textbf{X}}_{n},X_{2}-\overline{\textbf{X}}_{n},
\ldots,X_{n}-\overline{\textbf{X}}_{n} \}$, where $\overline{\textbf{X}}_{n}$ is the sample mean for the path $\{X_{1},X_{2},\ldots,X_{n}\}$. The critical value $\tilde c_{n,b}^{\{\psi\}}(1-\alpha)$ is calculated according to the formula that utilises subsampling procedure:
$$\tilde c_{n,b}^{\{\psi\}}(1-\alpha)=\inf\{x: \tilde
L_{n,b}^{\{\psi\}}(x) \geq 1-\alpha \},$$ $$ \tilde
L_{n,b}^{\{\psi\}}(x)=\frac{1}{n-b+1}\sum_{t=1}^{n-b+1} \mathbf{1}\{
\sqrt{b}(|\hat r^{t-1,b}_{n}(\psi)| - |\hat r_{n}(\psi)| )\leq x \},$$
where
\begin{equation}
\hat r_{n}^{c,d}(\psi) = \frac{1}{d_n}\sum\limits_{j=c_n+1}^{c_n+d_n}(X_{j}-\overline{\textbf{X}}_{n})e^{-i \psi j}.
\nonumber
\end{equation}
and $\hat r_{n}(\psi)=\hat r_{n}^{0,n}(\psi)$.
\section{Statistical model of cyclical fluctuations}
In this section we present the statistical framework of extraction the cyclical component when the one-dimensional time series describing changes in economic activity is observed. In Section \ref{par_model} we present basic assumptions concerning the model, while in section \ref{par_estymacja} we describe in details the algorithm of formal statistical extraction of business cycle component.
\subsection{Model structure and assumptions}\label{par_model}
Let consider a real-valued time series, denoted by $\ciag{P_{t}}$. At the beginning of this section we assume that the unconditional expectation for the process $\ciag{P_{t}}$  exists for any $t \in \mathbb{Z}$.
\\An interesting case, that is of particular interest in econometrics is the class of integrated stochastic processes, denoted by $I(d)$ for integration of order $d\in \mathbb{N}$. If we are interested in analysis of $I(d)$ processes in our framework, some additional assumptions should be imposed top assume the existence of unconditional moments. In the case when $\ciag{P_{t}}$ is $I(1)$ process it is sufficient to assume additionally that there exists $t_0 \in \mathbb{Z}$ such that $E(P_{t_{0}})<\infty$. Hence we obtain in this case, that $E(P_{t})< \infty$  for any $t \in \mathbb{Z}$. More generally, if $\ciag{P_{t}}$ follows $I(d)$ process, then it is sufficient to assume that there exists $t_{0} \in
\mathbb{Z}$, such that $E(P_{t_0 + k})<\infty$  for $k=0,1,\ldots,d-1$, to assure moment existence. Consequently, we formally exclude in our analysis processes with pure integration, but some restricted cases, representing strict nonstationarity with finite unconditional mean, may be modelled.
\\For further analysis we assume that the mean function $\mu_{P}(t)=E(P_{t})$ is defined by the sum of deterministic function $f(t,\beta)$, parameterized by $\beta \in \mathbb{R}^p$, and almost periodic function $g(t)$, with the Fourier expansion of the form:
\begin{equation}
g(t)=\sum\limits_{\psi \in \Psi_{P}}m_{P}(\psi)e^{i \psi t}.
\label{fouriera_produkcja1}
\end{equation}
For convenience, we rewrite $g(t)$ in equivalent representation:
$$g(t)=\sum_{\psi \in \Psi_{P} \cap [0,\pi] }a_{P}(\psi)\cos( \psi t) + b_{P}(\psi)\sin( \psi t).$$
This automatically implies, that:
\begin{equation}
\mu_{P}(t)=f(t,\beta) + g(t) = f(t,\beta) +\sum\limits_{\psi \in \Psi_{P}}m_{P}(\psi)e^{i \psi t}.
\label{fouriera_produkcja}
\end{equation}
Equation (\ref{fouriera_produkcja}) leads to a more general approach to modelling business fluctuations, than those presented in the literature so far; see for example: \cite{Beveridge_81}, \cite{Clark_87}, \cite{harvey_93}, \cite{Hamilton_89}, \cite{krolzig_97}. The main advantage of our approach is, that it only relies on the specification of the first moment of the time series $\ciag{P_{t}}$, making model assumptions much weaker. To illustrate the importance of our assumptions and generalisation we present an example below.
\begin{example}
Let $\ciag{P_t}$ be a time series such that $P_t=P_{t-1}+\epsilon_{t},$ where  $E(P_{0})=b$ and $\ciag{\epsilon_{t}}$ is APC time series with expectation function  $\mu_{\epsilon}(\cdot)$ such that $\mu_{\epsilon}(t)=a+g(t)-g(t-1),$
where $g: \mathbb{Z} \to \mathbb{R}$ is a function of the form:
$$g(t)=\sum\limits_{\psi \in \Psi} m(\psi)e^{i \psi t},$$
$a \in \mathbb{R}$ and $\text{card}(\Psi) < \infty$.
Notice that for any $t \geq 1$ we have
$$P_t=P_{0}+\epsilon_{1}+\epsilon_{2}+\ldots+\epsilon_{t}.$$
Therefore
$$E(P_t)=b+\sum\limits_{j=1}^{t}E(\epsilon_{j})= b+ a t -g(0) + g(t) = f(t,\beta) + g(t),$$
where $f(t,\beta)=\beta_0 + \beta_1 t$, $\beta_0=b-g(0)$, $\beta_1=a$. This means that time series  $\ciag{P_t}$ can be represented as  (\ref{fouriera_produkcja}). If $g(t)\equiv 0$, then $\mu_{\epsilon}(t)=a$, and time series $\ciag{P_{t}}$ can be interpreted as $I(1)$ process with drift and assumption $E(P_{0})=b$.
\end{example}
The function $f(t,\beta)$ can be interpreted as a trend component, modelled in this paper by the polynomial. The function $g(t)$ contains summarised information about seasonal fluctuations, business fluctuations and long-term cyclical fluctuations. From the Fourier representation of $g(t)$  we split the whole set $\Psi_{P}$ of non-zero frequencies into the mutually exclusive sets, that are related to those three periodic attributes of time series dynamics. Initially, we interpret long-term cyclical fluctuations as those with the length more than $8$ years, since the frequency $\omega$ is related to the length of cycle that equals $2\pi/\omega$ units. In order to distinguish cyclical fluctuations from seasonal fluctuations we assume formally, that in the representation (\ref{fouriera_produkcja1}) the set $\Psi_P=\{\psi:m(\psi)\not =0\} \subset [0,2\pi)$, is unknown. For the set $\Psi_P$, let consider the following decomposition:
\begin{equation} \Psi_{P}=\Psi_{P,1} \cup \Psi_{P,2} \cup \Psi_{P,3}.
\end{equation}
We assume that  $\Psi_{P,1} \cap (0,0.35) =\Psi_{P,1}$, and consequently the set $\Psi_{P,1}$ represent all frequencies with corresponding length of the cycle greater than 17 months. Therefore the set $\Psi_{P,1}$ contains frequencies that can describe business fluctuations. The set $\Psi_{P,2}$ contains only seasonal frequencies, namely $\Psi_{P,2} \subset \{2 k \pi/12:\, k=0,1,\ldots,11\}$ while $\Psi_{P,3}$ contains all remained frequencies.
In the following section we concentrate our attention only to parameter identification and estimation in the set $\Psi_{P,1}$.
\subsection{Cycle identification and estimation}\label{par_estymacja}
Our approach aims at identification and estimation of cyclical fluctuations. In order to remove trend component and to weaken seasonal effects, observed time series is subject to some preliminary transformations, . Hence, we formulate the algorithm of frequency identification that consists of three basic steps. The first step enables to remove seasonal component, the second step detects the trend component, while in the third step, parameter identification and estimation is provided.
\begin{itemize}
\item[\textbf{Step 1 - }] \textbf{removing the seasonal component.} Seasonality appear in most monthly economic time series. More formally, we allow (it is assumed), that $\Psi_{P} \cap \{
2 k \pi/12:\;k=1,2,\ldots,11 \} \not =  \emptyset.$ Since, the estimation of the frequencies and corresponding
Fourier coefficients, that represent seasonal frequencies is not of particular importance in our paper, we use centered moving average filter \ma (see:
\cite{makridakis}, \cite{Brocwell}) to remove seasonal pattern. We show below that this filter does not change the elements of the set $\Psi_{P,1}$, what is crucial for future estimation procedure. Denote by $\ciag{Y_{t}}$ time series obtained by application of the centered moving average filter. It means, that $Y_{t}=L_{2 \times 12}(B)P_{t},$ where
$$L_{2 \times 12}(B) = (B^{-6}+2B^{-5}+\ldots+2B^{-1}+2+2B+\ldots+2B^{5}+B^{6})/24,$$
and $B^k P_{t}=P_{t-k}$ for any $t$ and $k$. Note that the expectation of the
time series $\ciag{Y_{t}}$ exists. On the basis of the Theorem \ref{tw-filtracja} and elementary calculations we get
\begin{equation}
\mu_{Y}(t)=E(Y_{t})=\underbrace{\tilde{\beta}_{0} + \tilde{\beta}_{1} t  + \ldots + \tilde{\beta}_{p} t^{p}}_{\tilde f(t,\tilde \beta)} + \sum\limits_{\psi \in \Psi_{Y}} m_{Y}(\psi) e^{i \psi t},
\label{mu_Y}
\end{equation}
\\where $\Psi_{Y} \cap \{ 2 k \pi/12:\;k=1,2,\ldots,11 \} =  \emptyset$,
$\Psi_{Y} = \Psi_{P} \setminus \{ 2 k \pi/12:\;k=1,2,\ldots,11 \}$ and $\tilde{f}$ is a function. Fourier
coefficients $ m_{P}(\psi)$ and $m_{Y}(\psi)$ are related according to the formula:
\begin{equation}
m_{Y}(\psi)=L_{2 \times 12}(e^{-i \psi})m_{P}(\psi).
\end{equation}
Notice that $\tilde f(t,\tilde \beta)$ is also a polynomial of order $p$. In particular, for $s \in \{p-1,p\}$ we have $\tilde{\beta}_{s}$ = $\beta_{s}$. Consequently, given model with $p=0$ or $p=1$ (i.e. constant or linear trend) we have, that $\tilde f(t,\tilde \beta) \equiv f(t, \beta)$. In case $p=2$ functions $\tilde{f}$ and $f$ have different values, but $\tilde f(t,\tilde \beta) - f(t, \beta)$ is constant over time. Additionally, filtering the series with centered moving average operator, we obtain, that:
$$ \Psi_Y \cap (0,0.35) = \Psi_{P,1}$$
and
$$\Psi_Y \cap \Psi_{P,2}=\emptyset,$$
which means that the set $\Psi_Y$ still contains the same elements as $\Psi_{P,1}$ and dos not contain seasonal frequencies from the set  $\Psi_{P,2}$.
\item[\textbf{Step 2 -}] \textbf{removing the trend component.} The case when $p=0$ is trivial. Let consider the case $p=1$. Application of difference operator $L_{1}(B)=(1-B)$ for the time series $\ciag{Y_{t}}$ results with time series $\ciag{X_{t}}$:
$$X_{t}=L_{1}(B)Y_{t}=Y_{t}-Y_{t-1}=(P_{t+6}-P_{t-5} + P_{t+5}-P_{t-6})/24.$$
The expectation of the time series  $\ciag{X_{t}}$ exists and is described by almost
periodic function of the form:
\begin{equation}
\mu_{X}(t)= {\beta}_{1} + \sum\limits_{\psi \in \Psi_{X}}m_{X}(\psi)e^{i \psi t},
\label{fouriera_produkcja12}
\end{equation}
where \begin{equation} \Psi_{X} \subset \{ 0\}  \cup \Psi_{P}  \setminus \{ 2 k
\pi/12:\;k=1,2,\ldots,11 \}, \label{for0}
 \end{equation}
and
\begin{equation}  \Psi_X \cap (0,0.35)= \Psi_{P,1}, \label{for00}
\end{equation}
which follows from the Theorem \ref{tw-filtracja}. Additionally, we have:
\begin{equation}
m_{X}(\psi) = L_{1}(e^{-i \psi})m_{Y}(\psi)= L_{1}(e^{-i \psi}) L_{2 \times 12}(e^{-i \psi})m_{P}(\psi).
\label{bx_bp1}
\end{equation}
and the Assumption \ref{fourier_mean} holds.
\\In the general case, when  $p \in \mathbb{N}$ we use natural operator $L_{p}(B)=(1-B)^p$. The resulting time series $\ciag{X_{t}}$ can be represented by the following transformation of $Y_t$:
$$X_{t}=(1-B)^p Y_{t},$$
and hence, the expectation of $X_t$ takes the form:
\begin{equation}
E(X_t)=\mu_{X}(t)= p! \beta_{p} + \sum\limits_{\psi \in \Psi_{X}}m_{X}(\psi)e^{i \psi t},
\label{fouriera_produkcja13}
\end{equation}
where, according to the Theorem 8.1:
\begin{equation}
m_{X}(\psi) = L_{p}(e^{-i \psi})m_{Y}(\psi)= L_{p}(e^{-i \psi})
L_{2 \times 12}(e^{-i \psi})m_{P}(\psi).
\label{bx_bp2}
\end{equation}
By estimation $|L_{p}(e^{-i \psi})L_{2 \times 12}(e^{-i \psi})|>0$, which is true for any $p \in \mathbb{N}$ and $\psi \in (0,0.35)$, we have:
\begin{equation}
\Psi_X \cap (0,0.35)= \Psi_{P,1}. \label{for1}
\end{equation}
Therefore the problem of parameter identification and estimation in the set
$\Psi_{P,1}$ reduce to the problem of parameter identification and estimation in
the set $\Psi_{X} \cap (0,0.35)$.
\item[\textbf{Step 3 - }] \textbf{parameter identification and estimation.} The formula (\ref{for1}) is crucial in the algorithm of parameter identification and estimation in the set $\Psi_{P,1}$. Initially, in Step 3 we formulate the additional assumption that the autocovariace function of time series $\ciag{X_{t}}$ exists and it is almost periodic function. Notice, that the weaker     assumption concerning periodic structure of autocovariance function appears in the literature concerning analysis of economic time series; see for example: \cite{parzen}, \cite{osborn}, \cite{franses_96}, \cite{franses_96_book}, \cite{franses_97}, \cite{franses_05}. We use statistics  $\tilde \Pi_{n}(\{\psi\})=\sqrt{n}| \hat r_{n}(\psi)|$ and corresponding critical value $\tilde c_{n,b}(0.99\%)$ for the series generated from the previous steps of the algorithm.
    The test statistics $\tilde \Pi_{n}(\{\psi\})$ can be interpret as a value of test statistics $\Pi_{n}(\{\psi\})=\sqrt{n}| \hat m_{n}(\psi)|$ based on the sample $\{X_{1}-\overline{\textbf{X}}_{n},X_{2}-\overline{\textbf{X}}_{n},
\ldots,X_{n}-\overline{\textbf{X}}_{n} \}$, where $\overline{\textbf{X}}_{n}$ is the sample mean for the path $\{X_{1},X_{2},\ldots,X_{n}\}$. The critical value is calculated according to the formula that utilises subsampling procedure presented in \cite{Pol}:
$$\tilde g_{n,b}^{\{\psi\}}(1-\alpha)=\inf\{x: \tilde
G_{n,b}^{\{\psi\}}(x) \geq 1-\alpha \},$$ where $$ \tilde
G_{n,b}^{\{\psi\}}(x)=\frac{1}{n-b+1}\sum_{t=1}^{n-b+1} \mathbf{1}\{
\sqrt{b}|\hat r^{t-1,b}_{n}(\psi)| \leq x \}.$$
We fix $b=2.5\sqrt{n}$ and we calculate test statistics and corresponding critical value for $\psi$ from the discrete set of frequencies on the interval $(0,0.35)$. If the value of test statistics is greater than the critical value on some subinterval $I \subset (0,0.35)$ we take this subinterval as the interval containing some elements of the set $\Psi_{P,1}$. Next, we estimate the frequency connected with subinterval $I$ using (\ref{eqww}). By plug in technique we estimate amplitude related to each identified frequency in almost periodic part of the mean function of the process $\ciag{P_{t}}$.
\end{itemize}
In our algorithm step 3 is fundamental in procedure of extraction business cycle component from the observed time series. Its main advantage is, that the frequencies describing the cyclical dynamics of economic activity are subject to formal statistical inference. This clearly distinguishes our approach from many other procedures presented in the literature, where the lack of statistical uncertainty in the procedure is very common and forces \emph{ad-hoc} approach; see for example polemics concerning detrending in \cite{Canova_98} and \cite{Burnside_98}.
\\However, it is very important, that the procedure yields only statistically significant frequencies, and extraction of the business cycle is subject to additional filtering. In the empirical part of the paper we use the Hodrick-Prescott (HP) filter (see \cite{hodrick_97}), with smoothness parameter $\lambda$. According to \cite{gomez99}, \cite{Gomez_01}, \cite{Maravall_01} parameter $\lambda$ can be described as the argument of frequency $\psi_0$:
\begin{equation}
\lambda=\frac{1}{4(1-\cos(\psi_0))^2},
\label{gomez}
\end{equation}
where $2\pi/\psi_0$ can be interpret as a length of the cycle. Hence, on the basis of our procedure, it is possible to choose appropriate parameter $\lambda$ of the HP  filter, restricting spectrum only to significant parameters in the set $\Psi_{P,1}$. Alternatively it is possible to apply any filter in cycle extraction. We choose the simplest HP filter for illustrative purposes.
\section{Empirical illustration}
In this part of the paper we analyse cyclical behaviour of production sector in Poland. In particular we apply our model and three step procedure in order to characterise business cycles in industrial production index and some subsector indices.
\\Figure \ref{masa0} (a) presents time series of industrial production index\footnote{Source: Eurostat.} in Poland from January 1995 to December 2009 (2005 year = 100\%). This index contains: mining and quarrying; manufacturing; electricity, gas, steam and air conditioning supply. In the first step we applied centered moving average filter \ma to eliminate strong seasonal effects. The results of filtering is plotted on the Figure \ref{masa0} (b). It is clear, that centered moving average filter removes the seasonal effects and also business fluctuations are clearly observable (see Figure \ref{masa0} (b)).
\\According to our algorithm, presented in previous sections, we present in Figure \ref{masa4} (a) first differences of the centered moving average filter applied for industrial production index. We see some evidence about the existence of cyclical behavior in time series under consideration. The amplitude of cycle does not seem to be constant over time. Also, the amplitude is smaller in period 1995-2001, while after year 2001 is characterized by greater variability. Therefore we use logarithm transformation for industrial production index to stabilize the amplitude. Figure \ref{masa4} (b) presents the first difference of centered moving average filter applied for logarithm of industrial production index. It is easy to see that the amplitude is more constant over time then before logarithm transformation.
\\Figure \ref{masa5a} presents plots of the values of the test statistics $\tilde
\Pi_{n}(\{\psi\})=\sqrt{n}| \hat r_{n}(\psi)|$ with corresponding critical value
$\tilde c_{n,b}(0.99\%)$. The test statistics exceeds the critical value in three
subsets on the interval $(0,0.35)$. Hence, taking care only of significant values
of test statistic on the Figure \ref{masa5a} and in zoom on Figure \ref{masa6}, we
assume that:
\begin{equation}
\Psi_{P,1} \cap (0,0.35)=\{\psi_{1}, \psi_{2}, \psi_{3}\}.
\end{equation}
The values of estimated frequencies from the set $\Psi_{P,1}$ were calculated according to (\ref{eqww}). These values with corresponding length of the cycle can be found in Table 1.
\\Estimated amplitude of the cycle with corresponding length $8.5$  and $3.4$ years equal
$0.13$ and $0.07$ respectively. This second amplitude dominates the estimated value of amplitude with
corresponding length $2$ years. We see, that $8.5$-year length of the cycle received data support. However we can not formally interpret such fluctuations as a long-term growth trend or business fluctuations. We should rather look at this fluctuations as a mixture of both long-term growth trend and business fluctuations. Consequently and unquestionable, the dataset support  fluctuations with corresponding length $3.4$ years as a basic characteristic of business cycle in industrial production in Poland. To confirm this statement we extract cyclical fluctuations from industrial production index (filtered by centered moving average filter \ma) with the use of HP filter condition to the values of parameter $\lambda$ fixed for $\lambda=5\,500$, $\lambda=12\,000$, $\lambda=32\,000$, $\lambda=55\,000$. The results are plotted on Figure \ref{prod_HP}. By restricting parameter $\lambda$ to values stated above, according to the formula \ref{gomez}, we extract fluctuations with the length not greater than $4.5$, $5.5$, $7$ and $8$ years respectively. Since our goal was to extract only business fluctuations without significant influence of long-term growth trend, we restrict filtering only to fluctuations with corresponding length shorter than $8$ years.
\\Analysing plots presented on Figure \ref{prod_HP} it is possible to confirm the presence of cycles in
industrial production in Poland with estimated length in the interval 3-4 years (during the period 1995 - 2009). In Table \ref{exp_rec} we determined the periods of recessions and expansions in industrial production. We interpret turning points as margins of this periods. In most cases the recession is shorter than expansion. Consequently, our analysis confirm results discussed in the literature, that business cycle in industrial production for Poland display asymmetric behaviour. Also, the business cycle \emph{troughs} are rather sharper than \emph{peaks}, which is also typical for business cycles; see \cite{Hicks_50}, \cite{milas_06}.
\\In the next step we provide a more detailed analysis based on a formal identification of business cycles in sectors and subsectors of industrial production in Poland. We use the same statistical tools as for the total industrial production index. We considered all categories identified for industrial production. The set of all modelled indices are presented in Table \ref{datasets}.
\\Figure \ref{prod-dane} presents plots of logarithms of all considered indices. Repeating the procedure, initially applied for the total index, we use centered moving average filter \ma to remove seasonal effect from the data sets (see Figure \ref{prod-2x12ma}). First differences are presented on the Figure \ref{prod-2x12ma-roznice}. It is clear, that majority of indices exemplify cyclical pattern, just like in the case of index of total production, but with rather differential amplitudes and length.
\\To identify frequencies in the unknown set  $\Psi_{P,1}$ we applied again the test statistic $\tilde \Pi_{n}(\{\psi\})=\sqrt{n}|\hat r_{n}(\psi)|$ and corresponding critical value $\tilde c_{n,b}^{\{\psi\}}(\alpha)$. The results are presented on the Figure \ref{prod-2x12ma-roznice-test1}, where we plotted estimated lengths of the cycles together with appropriate estimated amplitude. In different sectors and subsectors of industrial production the data provide evidence in favour of cycles with length in the interval 1.5-3 years. However, those cycles are characterized by much shorter estimated amplitude than the cycles with length in the interval 3-4 years. The cycles with estimated length in the interval 3-4 years were supported in predominant set of subindices. Only in the case of manufacture, food products and beverages (C10\_C11), manufacture of basic pharmaceutical products and pharmaceutical preparations (C21) and electricity, gas, steam and air conditioning supply (D).
\\We see, that the largest estimated amplitude characterize cycles with the corresponding length of more than 4 years. But only in a few cases the cycles with length 5-8 years were supported. It can be seen clearly on the Figure \ref{prod-2x12ma-roznice-test1-amplitudaall}, where the comparison of all identified cycles for all 32 indices is presented. In spite of the fact, that observed time series were subject to filtering with the use of centered moving average filter \ma, all investigated subindices provide data support in favour of the existence of cycles with length not greater than two years. However, as seen on Figure \ref{prod-2x12ma-roznice-test1-amplitudaall}, those short cycles are characterized by amplitudes with values located relatively close to zero, as compared with longer significant cycles. This makes such a short term periodic pattern not extremely important in describing cyclical behaviour of modelled time series. Additionally, all indices support cycles of length 3-4 years, with relatively greater value of corresponding amplitudes as compared to characteristics of short term fluctuations. Also, except manufacture of wearing apparel (14-th index) we see no data support for cycles with length between 4 and 7 years. Consequently, for all considered subindices, the set of statistically significant cycles is clearly divided in two separate parts. The first set is constituted by short term cycles with small amplitudes together with middle term fluctuations, attributed in most cases by stronger amplitudes. The second set consists of frequencies, describing long term cycle, namely with length not less than 7 years. Just like in case of the total production index, we tend not to interpret those long term fluctuations as important characteristic of business cycle for Polish economy. According to our results, just like for the total index, all considered subindices are characterized by existence of the long term trend.
\\Using HP filter we extract business cycles from all industrial production indexes.
Similar as for industrial production index - total we fix the parameter $\lambda$ as
$\lambda=5\,500$, $\lambda=12\,000$, $\lambda=32\,000$, $\lambda=55\,000$ (see Figure
\ref{prod-2x12ma-roznice-test1-hp}). The reasons why we chose those values of $\lambda$ parameter are the following. Firstly, we fix the same parameter to compare results with those obtained for industrial production index. Secondly, the length of the cycle that is greater than 8 years is not clearly constant over different subindices and therefore we can not interpret those fluctuations as business fluctuations. Finally,  we can notice that in the interval from 4 to 8 years there are only a few significant lengths of cycles and this should give rise to extract similar shape of business fluctuations for different values of parameter $\lambda$. Almost all extracted fluctuations reveal presence of cycles with length in the interval 3-4 years. Summing  up, the cycle with length in the range 3-4 years is typical and prevalent for cyclical fluctuations in industrial production in Poland.
\section{Concluding remarks}
In this paper a novel approach in business fluctuations analysis for one dimensional
economic processes is proposed. Using theory of almost periodically correlated time series and subsampling procedure we consider a formal approach to estimate the length of business cycles. The main advantage of our approach is, that the business cycle characteristics are treated in formal way, and are subject to statistical inference. This clearly distinguishes presented framework from many filtering-based approaches, broadly considered in empirical applications. We model business fluctuations by parameters of discrete spectra of time series, under assumption that amplitude of this fluctuations is constant over time. Taking in consideration estimated length of the cycles we extract business fluctuations by HP filter for parameter of smoothness chosen on the basis of formal procedure.
\\The main conclusion presented in empirical illustration is that, during period 1995-2009, we confirm (using statistical tools) the presence of 3-4 years length of business cycle in industrial production index in Poland. This result was obtained either on the basis of the total index and also analysing subindices. This result confirms analyses conducted so far on the basis of Polish macroeconomic time series; see \cite{gghp}, \cite{adpw}, \cite{skrzypczynski_08} and \cite{skrzypczynski_06}. 
\\All indices and subindices supported significance of short term and middle term fluctuations, attaching relatively small amplitudes for periodicity with length less than 2 years. Additionally, in all time series we detected existence of longer term cycle (7-8 years), interpreted in this paper as a trend or a mixture of both trend and business cycle fluctuations.
\section{Appendix}
\begin{assumption} Let $\ciag{X_{t}}$ be APC time series such that for any
 $x \in [0,2 \pi)$ there exists a constant $B(x)$ (which dependents only on $x$), such that we have estimation
\begin{equation}
\sum\limits_{\psi \in \Psi \setminus \{x \} } \left|m(\psi) \emph{cosec}\left({\frac{\psi - x}{2}}
\right) \right| < B(x) < \infty,
\label{zalozenie-bx}
\end{equation}
\label{fourier_mean}
\end{assumption}
\begin{theorem}
Let $\ciagn{X_{t}}$ be a time series for which the expectation function exists and it
is almost periodic function of the form $\mu_{X}(t)=E(X_t)=\sum\limits_{\psi \in \Psi_{X}}m_{X}(\psi)e^{i \psi t}.$ We
assume that for the set $\Psi$ and corresponding Fourier coefficients $m(\cdot)$ the
Assumption \ref{fourier_mean} holds. Let
$L(B)=\sum_{j=-p}^{q}a_j B^j$ be a linear filter, where $p,q \geq 0$,
$\{a_j\}_{j=-p}^{q}$ is a sequence of real numbers, and $B^j X_{t}=X_{t-j}$
for any  $j \in \mathbb{Z}$. Then
$$E(Y_t)=\mu_{Y}(t)=\sum\limits_{\psi \in \Psi_{Y}}m_{Y}(\psi)e^{i \psi t},$$
where $\Psi_Y=\Psi_X$ and $m_{Y}(\psi)=L(e^{-i \psi})m_{X}(\psi)$. Additionally, assumption \ref{fourier_mean} holds for the set
$\Psi_{Y}$ and corresponding coefficients $m_{Y}(\cdot)$. \label{tw-filtracja}
\end{theorem}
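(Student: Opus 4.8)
The plan is to prove the statement by a direct computation of $E(Y_t)$, followed by a domination argument that transports Assumption \ref{fourier_mean} from $(\Psi_X, m_X)$ to $(\Psi_Y, m_Y)$. The filtered series is the finite linear combination $Y_t = L(B)X_t = \sum_{j=-p}^{q} a_j X_{t-j}$, so its expectation exists whenever each $E(X_{t-j})$ does, and linearity of expectation gives $E(Y_t) = \sum_{j=-p}^{q} a_j \mu_X(t-j)$. This reduces everything to manipulating a finite sum of shifted copies of $\mu_X$.

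First I would substitute the Fourier representation $\mu_X(t-j) = \sum_{\psi \in \Psi_X} m_X(\psi) e^{i\psi(t-j)}$ into this finite sum and interchange the order of summation. Because the outer sum over $j$ ranges over the finite set $\{-p, \ldots, q\}$, no Fubini-type hypothesis is needed: a finite sum of convergent series is again a convergent series whose terms may be collected frequency by frequency. Factoring $e^{i\psi t}$ out of each term isolates the transfer function and gives $E(Y_t) = \sum_{\psi \in \Psi_X} \bigl(\sum_{j=-p}^{q} a_j e^{-i\psi j}\bigr) m_X(\psi) e^{i\psi t} = \sum_{\psi \in \Psi_X} L(e^{-i\psi}) m_X(\psi) e^{i\psi t}$, which identifies $m_Y(\psi) = L(e^{-i\psi}) m_X(\psi)$.

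The support set deserves a word of care. By definition $\Psi_Y = \{\psi : m_Y(\psi) \neq 0\}$, and since $m_Y(\psi) = L(e^{-i\psi}) m_X(\psi)$ one has in general only $\Psi_Y = \{\psi \in \Psi_X : L(e^{-i\psi}) \neq 0\} \subseteq \Psi_X$, with equality $\Psi_Y = \Psi_X$ precisely when the transfer function $L(e^{-i\psi})$ does not vanish anywhere on $\Psi_X$; this is the case relevant to the application, where the factor $|L_p(e^{-i\psi}) L_{2\times 12}(e^{-i\psi})|$ is positive on $(0,0.35)$. Either way the inclusion $\Psi_Y \subseteq \Psi_X$ is all that the final step needs.

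The last step is to verify that Assumption \ref{fourier_mean} is inherited. Here the key elementary estimate is the uniform bound $|L(e^{-i\psi})| = |\sum_{j=-p}^{q} a_j e^{-i\psi j}| \leq \sum_{j=-p}^{q} |a_j| =: C$, valid for every $\psi$. Hence $|m_Y(\psi)| \leq C |m_X(\psi)|$, and for any fixed $x \in [0,2\pi)$, using $\Psi_Y \setminus \{x\} \subseteq \Psi_X \setminus \{x\}$, I would dominate term by term to obtain $\sum_{\psi \in \Psi_Y \setminus \{x\}} |m_Y(\psi) \cosec((\psi-x)/2)| \leq C \sum_{\psi \in \Psi_X \setminus \{x\}} |m_X(\psi) \cosec((\psi-x)/2)| < C\, B(x) < \infty$, so the assumption holds for $(\Psi_Y, m_Y)$ with constant $C\,B(x)$. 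The computation itself is routine; the only points that genuinely require attention are the legitimacy of the frequency-by-frequency rearrangement (guaranteed by finiteness of the filter) and the honest bookkeeping of the support set $\Psi_Y$, since a filter may annihilate some frequencies of $\Psi_X$.
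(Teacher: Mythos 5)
Your proposal is correct and follows essentially the same route as the paper's own proof: a direct computation of $E(Y_t)$ using linearity and the finiteness of the filter, followed by the domination bound $|m_Y(\psi)|\leq |m_X(\psi)|\sum_{j}|a_j|$ to transfer Assumption \ref{fourier_mean} to $(\Psi_Y,m_Y(\cdot))$. Your additional remark that in general only $\Psi_Y\subseteq\Psi_X$ holds, with equality requiring $L(e^{-i\psi})\neq 0$ on $\Psi_X$, is a point the paper's proof silently glosses over, and it is in fact the more accurate statement.
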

\begin{proof}
Notice that
\begin{equation}
\begin{split}
E(Y_t)& = E\left(\sum_{j=p_1}^{p_2}a_j B^j X_t  \right)= E\left(\sum_{j=p_1}^{p_2}a_j X_{t-j}  \right)\\
& = \sum_{j=p_1}^{p_2}a_j \sum\limits_{\psi \in \Psi_{X}}m_{X}(\psi)e^{i \psi (t-j)} =
 \sum\limits_{\psi \in \Psi_{X}}m_{X}(\psi) \sum_{j=p_1}^{p_2} a_j e^{-i \psi j}
 e^{i \psi t} =  \sum\limits_{\psi \in \Psi_{X}}m_{X}(\psi) L( e^{-i \psi})
 e^{i \psi t}.
\end{split}
\end{equation}
By estimation $|m_{Y}(\psi)|\leq |m_{X}(\psi)| \sum\limits_{j=p_1}^{p_2}|a_j|$ we
conclude that condition 1.1 from \cite{Lenart_11_jtsa} holds for the set
$\Psi_{Y}$ and corresponding Fourier coefficients $m_{Y}(\cdot)$.
\end{proof}

\begin{theorem}
Take any $\psi \in (0,2 \pi)$. Let  the assumptions of Theorem 2.2 in \cite{Lenart_11_jtsa} hold.  Then
\begin{itemize}
\item[\emph{(i)}]$\tilde L_{n,b}^{\{\psi\}}(x) \stackrel{p}{\rightarrow}
    J^{\{\psi\}}(x)$, for any $x \in \mathbb{R}$,
\item[\emph{(ii)}]$\sup_{x \in \mathbb{R}}|\tilde L_{n,b}^{\{\psi\}}(x) -
    J^{\{\psi\}}(x)|\stackrel{p}{\longrightarrow}0$,
\item[\emph{(iii)}] subsampling confidence intervals for the parameter $|m(\psi)|$ are asymptotically consistent, which means that
\begin{equation}
P\left(\sqrt{n} \left(|\hat r_{n}(\psi)|-|m(\psi)|\right)\leq \tilde c_{n,b}^{\{\psi\}}(1-\alpha)\right)\longrightarrow
1-\alpha, \;\;\;
\label{con-sub1-beta}
\end{equation}
\end{itemize}
where $b=b(n) \rightarrow \infty$ and $b/n \to 0$.
\label{subsamplingp-mean}
\end{theorem}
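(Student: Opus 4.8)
The plan is to reduce the statement about the globally centered quantities $\hat r$ to the already-established consistency of the \emph{uncentered} subsampling procedure built from $\hat m$, and to show that the global mean-centering perturbs every subsample root by an amount that is negligible uniformly in the block starting point. Recall that by \cite{Lenart_11_jtsa} (Theorem 2.3) the subsampling procedure for $\theta(P)=|m(\psi)|$ with subsample statistics $|\hat m_{n}^{t-1,b}(\psi)|$, centering $|\hat m_{n}(\psi)|$ and rate $\upsilon_n=\sqrt{n}$ is consistent; that is, the uncentered empirical distribution
$$L_{n,b}^{\{\psi\}}(x)=\frac{1}{n-b+1}\sum_{t=1}^{n-b+1}\mathbf{1}\{\sqrt{b}(|\hat m_{n}^{t-1,b}(\psi)|-|\hat m_{n}(\psi)|)\leq x\}$$
converges to $J^{\{\psi\}}(x)$ in probability, pointwise and (since $J^{\{\psi\}}$ is continuous under the assumptions of Theorem 2.2) uniformly, where $J^{\{\psi\}}$ is the limit law of $\sqrt{n}(|\hat m_{n}(\psi)|-|m(\psi)|)$. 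The whole task is therefore to compare $L_{n,b}^{\{\psi\}}$ with $\tilde L_{n,b}^{\{\psi\}}$.

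Second, I would isolate the correction coming from the global centering. Writing the block estimator explicitly gives
$$\hat r_{n}^{t-1,b}(\psi)=\frac{1}{b}\sum_{j=t}^{t+b-1}(X_{j}-\overline{\textbf{X}}_{n})e^{-i\psi j}=\hat m_{n}^{t-1,b}(\psi)-\overline{\textbf{X}}_{n}\,\frac{1}{b}\sum_{j=t}^{t+b-1}e^{-i\psi j}.$$
Because $\psi\in(0,2\pi)$, the geometric sum obeys $\left|\frac{1}{b}\sum_{j=t}^{t+b-1}e^{-i\psi j}\right|\leq \frac{2}{b\,|1-e^{-i\psi}|}$, uniformly in $t$. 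Since $\overline{\textbf{X}}_{n}=\hat m_{n}^{0,n}(0)$ converges in probability to $m(0)$ and is thus $O_p(1)$, the reverse triangle inequality yields
$$\sup_{1\le t\le n-b+1}\sqrt{b}\,\left|\,|\hat r_{n}^{t-1,b}(\psi)|-|\hat m_{n}^{t-1,b}(\psi)|\,\right|\;\le\; \frac{2\,|\overline{\textbf{X}}_{n}|}{\sqrt{b}\,|1-e^{-i\psi}|}\stackrel{p}{\longrightarrow}0.$$
The same computation with block length $n$ gives $\sqrt{n}(|\hat r_{n}(\psi)|-|\hat m_{n}(\psi)|)=O_p(1/\sqrt{n})\stackrel{p}{\longrightarrow}0$, and a fortiori $\sqrt{b}(|\hat r_{n}(\psi)|-|\hat m_{n}(\psi)|)\stackrel{p}{\longrightarrow}0$ since $b<n$.

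Third, I would transfer the convergence. Combining the two displays, the centered and uncentered subsample roots satisfy $\sup_{t}|\tilde U_{n,b,t}-U_{n,b,t}|\stackrel{p}{\longrightarrow}0$, where $\tilde U_{n,b,t}=\sqrt{b}(|\hat r_{n}^{t-1,b}(\psi)|-|\hat r_{n}(\psi)|)$ and $U_{n,b,t}$ is its uncentered analogue. On the event $\{\sup_{t}|\tilde U_{n,b,t}-U_{n,b,t}|\le\epsilon\}$, whose probability tends to one, one has the sandwich $L_{n,b}^{\{\psi\}}(x-\epsilon)\le\tilde L_{n,b}^{\{\psi\}}(x)\le L_{n,b}^{\{\psi\}}(x+\epsilon)$ for every $x$. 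Letting $n\to\infty$ and then $\epsilon\downarrow 0$, and using continuity of $J^{\{\psi\}}$, the pointwise limit (i) follows, and the monotonicity together with uniform continuity of $J^{\{\psi\}}$ upgrades this to the uniform limit (ii). For (iii) I would first note that $\sqrt{n}(|\hat r_{n}(\psi)|-|m(\psi)|)$ and $\sqrt{n}(|\hat m_{n}(\psi)|-|m(\psi)|)$ differ by $o_p(1)$, so by Slutsky's lemma the former also converges weakly to $J^{\{\psi\}}$; then (ii) together with continuity of $J^{\{\psi\}}$ at $c(1-\alpha)$ lets me invoke implication (iii) of Theorem 4.2.1 in \cite{Pol} to obtain \eqref{con-sub1-beta}.

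The main obstacle is the uniformity over all $n-b+1$ starting points in the second step: the argument hinges on the correction term being of order $1/b$ \emph{simultaneously} for every $t$, which is exactly what the bound $|1-e^{-i\psi}|>0$ provides and which would break down at $\psi=0$, where the geometric sum is of order $b$. This is precisely why the statement restricts $\psi$ to $(0,2\pi)$; everything else reduces to the reverse triangle inequality, Slutsky's lemma, and the already-proved uncentered consistency.
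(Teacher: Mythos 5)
Your proposal is correct and follows essentially the same route as the paper's proof: the heart of both arguments is the uniform-in-$t$ bound $\sqrt{b}\,\bigl|\,|\hat r_{n}^{t-1,b}(\psi)|-|\hat m_{n}^{t-1,b}(\psi)|\,\bigr|\le b^{-1/2}|\overline{\textbf{X}}_{n}|\,|\cosec(\psi/2)|$ (your $2/|1-e^{-i\psi}|$ is exactly $|\cosec(\psi/2)|$) combined with an $\epsilon$-sandwich of the empirical distribution functions. The only cosmetic difference is the reference point of the comparison: you anchor to the already-consistent $\hat m$-based subsampling distribution $L_{n,b}^{\{\psi\}}$ centered at $|\hat m_n(\psi)|$ and invoke Theorem 2.3 of the cited JTSA paper, whereas the paper anchors to the oracle-centered statistic $U_n(x)$ built from $\tau_b(|\hat m_{n}^{t-1,b}(\psi)|-|m(\psi)|)$ and reruns the steps of Theorem 4.2.1 of Politis et al.; both reductions rest on the same perturbation estimate and Slutsky step.
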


\begin{proof}[\textbf{Proof of the Theorem \ref{subsamplingp-mean}.}] I this proof we use the same steps as in Theorem 4.2.1, page 103
in \cite{Pol}. Let $q=n-b+1$, $\tau_{n}=\sqrt{n}$ and
$$ U_{n}(x)=\frac{1}{q}\sum\limits_{t=1}^{q}\mathbf{1}\{\tau_{b}(|\hat m_{n}^{t-1,b}(\psi)|-|m(\psi)|)\leq
x\}.$$ Notice, that
$$\tilde L_{n,b}^{\{\psi\}}(x)=\frac{1}{q}\sum\limits_{t=1}^{q}
\mathbf{1}\{\tau_{b}[|\hat m_{n}^{t-1,b}(\psi)|-|m(\psi)|]
+\tau_{b}[(|m(\psi)|-|\hat r_{n}(\psi)|)+(|\hat r_{n}^{t-1,b}(\psi)|-|\hat m_{n}^{t-1,b}(\psi)|)]\leq
x\}.$$ We need the following lemma.
\begin{lemma}
For any real $x$ and $\epsilon>0$ we have estimation
\begin{equation}
 U_{n}(x-\epsilon)\mathbf{1}\{ E_{n}\}\leq \tilde L_{n,b}^{\{\psi\}}(x)\mathbf{1}\{ E_{n}\}\leq
 U_{n}(x+\epsilon),
\label{xx}
\end{equation}
where $ E_{n}=\{\tau_{b}\max\limits_{1 \leq t \leq q }|(|m(\psi)|-|\hat
r_{n}(\psi)|)+(|\hat r_{n}^{t-1,b}(\psi)|-|\hat m_{n}^{t-1,b}(\psi)|)|\leq
\epsilon\}$.
\label{lema}
\end{lemma}
\begin{proof} Let consider two cases:
\begin{itemize}
\item[$1^{o}$]$\mathbf{1}\{E_{n}\}=0$, inequality (\ref{xx}) holds
\item[$2^{o}$]$\mathbf{1}\{E_{n}\}=1$, then

$$\tau_{b}\max\limits_{1 \leq t \leq q
    }|(|m(\psi)|-|\hat r_{n}(\psi)|)+(|\hat r_{n}^{t-1,b}(\psi)|-|\hat
    m_{n}^{t-1,b}(\psi)|)| \leq \epsilon,$$ which means that for any  $1 \leq
    t \leq q$
    $$\eta_{n}^{t-1,b}(\psi):=\tau_{b}[(|m(\psi)|-|\hat
     r_{n}(\psi)|)+(|\hat r_{n}^{t-1,b}(\psi)|-|\hat
     m_{n}^{t-1,b}(\psi)|)]\in[-\epsilon,\epsilon].$$ Using next inequality $x-\eta_{n}^{t-1,b}(\psi) \geq x-\epsilon$, which is true for any $1 \leq t \leq q$ we get
\begin{equation}
\begin{split}
{ } & \mathbf{1}\{\tau_{b}[|\hat m_{n}^{t-1,b}(\psi)|-|m(\psi)|]
+\tau_{b}[(|m(\psi)|-|\hat r_{n}(\psi)|)+(|\hat r_{n}^{t-1,b}(\psi)|-|\hat m_{n}^{t-1,b}(\psi)|)]\leq x\}\\
 & = \mathbf{1}\{\tau_{b}[|\hat m_{n}^{t-1,b}(\psi)|-|m(\psi)|]\leq x-\eta_{n}^{t-1,b}(\psi)\}
\geq   \mathbf{1}\{\tau_{b}[|\hat m_{n}^{t-1,b}(\psi)|-|m(\psi)|]\leq x-\epsilon\}.
\end{split}
\label{x1}
\end{equation}
Analogically, using inequality  $x- \eta_{n}^{t-1,b}(\psi) \leq
x+\epsilon$ we get
\begin{equation}
\begin{split}
{ } & \mathbf{1}\{\tau_{b}[|\hat m_{n}^{t-1,b}(\psi)|-|m(\psi)|]
+\tau_{b}[(|m(\psi)|-|\hat r_{n}(\psi)|)+(|\hat r_{n}^{t-1,b}(\psi)|-|\hat m_{n}^{t-1,b}(\psi)|)]\leq x\}\\
 & = \mathbf{1}\{\tau_{b}[|\hat m_{n}^{t-1,b}(\psi)|-|m(\psi)|]\leq x-\eta_{n}^{t-1,b}(\psi)\}
\leq   \mathbf{1}\{\tau_{b}[|\hat m_{n}^{t-1,b}(\psi)|-|m(\psi)|]\leq x+\epsilon\}.
\end{split}
\label{x2}
\end{equation}
Summing inequality (\ref{x1}), (\ref{x2}) dla $t=1,2,\ldots,q$
we get (\ref{xx}). This completes the proof of lemma.
\end{itemize}
\end{proof}
In next step we show that $P(E_{n}) \to 0$. Using inequality $||z_1|-|z_2|| \leq |z_1-z_2|$
(which is true for any complex numbers $z_1,z_2$) and inequality $|\sum_{j=p}^{q} c_j e^{i j x}| \leq c_p |\text{cosec}(x/2)|,$ (which is true for any
$x \not \equiv 0 \text{ modulo } 2 \pi$ and real numbers $c_p \geq c_{p+1} \geq \ldots \geq c_q$
) we have
\begin{equation}
\begin{split}
\max\limits_{1 \leq t \leq q } |\eta_{n}^{t-1,b}(\psi)|
    & \leq \max\limits_{1 \leq t \leq q } \tau_{b}||m(\psi)|-|\hat r_{n}(\psi)||
    + \max\limits_{1 \leq t \leq q } \tau_{b} ||\hat r_{n}^{t-1,b}(\psi)|-|\hat
    m_{n}^{t-1,b}(\psi)||\\
    & \leq \tau_{b}||m(\psi)|-|\hat r_{n}(\psi)|| + \max\limits_{1 \leq t \leq q }
    \tau_{b} |\hat r_{n}^{t-1,b}(\psi)-\hat m_{n}^{t-1,b}(\psi)|\\
    & \leq \tau_{b}|m(\psi)-\hat r_{n}(\psi)| + \frac{\tau_{b}}{b} |\overline{\textbf{X}}_{n}||\cosec(\psi/2)|\\
    & \leq \tau_{b}|m(\psi)-\hat m_{n}(\psi)| + \tau_{b}|\hat m_{n}(\psi) -\hat r_{n}(\psi)| + \frac{\tau_{b}}{b} |\overline{\textbf{X}}_{n}||\cosec(\psi/2)|\\
    & \leq \tau_{b}|m(\psi)-\hat m_{n}(\psi)| +\frac{\tau_{b}}{n} |\overline{\textbf{X}}_{n}||\cosec(\psi/2)| + \frac{\tau_{b}}{b} |\overline{\textbf{X}}_{n}||\cosec(\psi/2)|\\
\end{split}
\end{equation}
By convergence  $\tau_{b}|m(\psi)-\hat m_{n}(\psi)|\stackrel{p}{\rightarrow }0$ and $\frac{\tau_{b}}{b}
|\overline{\textbf{X}}_{n}| \stackrel{p}{\rightarrow }0$ we get
$$\max\limits_{1 \leq t \leq q } |\eta_{n}^{t-1,b}(\psi)|\stackrel{p}{\rightarrow }0,$$
which means that  $P(E_{n}) \to 1$. Using next Slutsky's Lemma and Theorem 2.1 in \cite{Lenart_11_jtsa}
we have $\sqrt{n}(|r_{n}^{t-1,b}(\psi)|-|m(\psi)|)
\stackrel{d}{\rightarrow }J^{\{\psi\}}$.
To finish the proof it is sufficient to follows next steps in Theorem 4.2.1, page 103 in \cite{Pol}, therefore we omit them.
\end{proof}

\bibliographystyle{plainnat} 
\bibliography{biblioMateusz}

\begin{thebibliography}{49}
\providecommand{\natexlab}[1]{#1}
\providecommand{\url}[1]{\texttt{#1}}
\expandafter\ifx\csname urlstyle\endcsname\relax
  \providecommand{\doi}[1]{doi: #1}\else
  \providecommand{\doi}{doi: \begingroup \urlstyle{rm}\Url}\fi

\bibitem[Adamowicz et~al.(2008)Adamowicz, Dudek, Pachucki, and Walczyk]{adpw}
E.~Adamowicz, S.~Dudek, D.~Pachucki, and E.~Walczyk.
\newblock \emph{Synchronizacja cyklu koniunkturalnego polskiej gospodarki z
  krajami strefy euro w kontekœcie struktury tych gospodarek}.
\newblock Instytut Rozwoju Gospodarczego, Szko³a G³ówna Handlowa, Warszawa,
  2008.

\bibitem[Beveridge and Nelson(1981)]{Beveridge_81}
S.~Beveridge and C.R. Nelson.
\newblock A new approach to decomposition of economic time series into
  permanent and transitory components with particular attention to measurement
  of the business cycle.
\newblock \emph{Journal of Monetary Economics}, 7\penalty0 (2):\penalty0
  151--174, 1981.

\bibitem[Bloomfield et~al.(1994)Bloomfield, Hurd, and Lund]{Blo_Hurd_Lund_94}
P.~Bloomfield, H.~Hurd, and R.~Lund.
\newblock Periodic correlation in stratospheric ozone time series.
\newblock \emph{J. Time Ser. Anal}, 15\penalty0 (2):\penalty0 127--150, 1994.

\bibitem[Bollerslev and Ghysels(1996)]{Bollerslev_96}
T.~Bollerslev and E.~Ghysels.
\newblock Periodic autoregressive conditional heteroscedasticity.
\newblock \emph{Journal of Business and Economic Statistics}, 14(2):\penalty0
  139--152, 1996.

\bibitem[Brockwell and Davis(2002)]{Brocwell}
P.J. Brockwell and R.A. Davis.
\newblock \emph{Introduction to Time Series and Forecasting}.
\newblock Springer-Verlag, New York, 2002.

\bibitem[Burnside(1998)]{Burnside_98}
C.~Burnside.
\newblock Detrending and business cycle facts: A coment.
\newblock \emph{Journal of Monetary Economics}, 41:\penalty0 513--532, 1998.

\bibitem[Burridge and Taylor(2001)]{Burridge_01}
P.~Burridge and A.M. Taylor.
\newblock On regression-based tests for seasonal unit roots in the presence of
  periodic heteroscedasticity.
\newblock \emph{Journal of Econometrics}, 104:\penalty0 91–117, 2001.

\bibitem[Canova(1998)]{Canova_98}
F.~Canova.
\newblock Detrending and business cycle facts.
\newblock \emph{Journal of Monetary Economics}, 41:\penalty0 475--512, 1998.

\bibitem[Clark(1987)]{Clark_87}
P.K. Clark.
\newblock The cyclical component of {U.S.} economic activity.
\newblock \emph{Quarterly Journal of Economics}, 102:\penalty0 797--814, 1987.

\bibitem[Corduneanu(1989)]{Cor}
C.~Corduneanu.
\newblock \emph{Almost Periodic Functions}.
\newblock Chelsea, New York, 1989.

\bibitem[Dehay and Hurd(1994)]{Deh_hurd_93}
D.~Dehay and H.~Hurd.
\newblock Representation and estimation for periodically and almost
  periodically correlated random processes.
\newblock \emph{In: W.A. Gardner (Ed.), Cyclostationarity in Communications and
  Signal Processing, IEEE Press}, pages 295--329, 1994.

\bibitem[Diebold and Rudenbush(1996)]{Diebold_Rudenbush}
F.X. Diebold and G.~Rudenbush.
\newblock Measuring business cycles: A modern perspective.
\newblock \emph{Review of Economic and Statistics}, 78:\penalty0 67--77, 1996.

\bibitem[Franses(1996)]{franses_96_book}
P.H. Franses.
\newblock \emph{Stochastic Trends in Economic Time Series}.
\newblock Oxford University Press, New York, 1996.

\bibitem[Franses and Boswijk(1996)]{franses_96}
P.H. Franses and H.P. Boswijk.
\newblock Temporal aggregation in a periodically integrated autoregressive
  process.
\newblock \emph{Statistics and Probability Letters}, 30:\penalty0 235--240,
  1996.

\bibitem[Franses and Dijk(2005)]{franses_05}
P.H. Franses and D.~Dijk.
\newblock The forecasting performance of various models for seasonality and
  nonlinearity for quarterly industrial production.
\newblock \emph{International Journal of Forecasting}, 21:\penalty0 87--102,
  2005.

\bibitem[Franses and Ooms(1997)]{franses_97}
P.H. Franses and M.~Ooms.
\newblock A periodic long-memory model for quarterly {UK} inflation.
\newblock \emph{International Journal of Forecasting}, 13:\penalty0 117--126,
  1997.

\bibitem[Gardner(1986)]{gardner86}
W.~A. Gardner.
\newblock Measurement of spectral correlation.
\newblock \emph{Signal Processing}, 34 (5):\penalty0 1111--1123, 1986.

\bibitem[Gardner et~al.(2006)Gardner, Napolitano, and Paura]{gardner2}
W.~A. Gardner, A.~Napolitano, and L.~Paura.
\newblock Cyclostationarity: Half a century of research.
\newblock \emph{Signal Processing}, 86:\penalty0 639--697, 2006.

\bibitem[Gladyshev(1961)]{gladyshev1}
E.~G. Gladyshev.
\newblock Periodically correlated random sequance.
\newblock \emph{Sov. Math.}, 2:\penalty0 385--388, 1961.

\bibitem[G\'{o}mez(2001)]{Gomez_01}
V.~G\'{o}mez.
\newblock The use of butterworth filters for trend and cycle estimation in
  economic time series.
\newblock \emph{Journal of Business and Economic Statistics}, 19(3):\penalty0
  365--373, 2001.

\bibitem[G\'{o}mez(1999)]{gomez99}
V.~G\'{o}mez.
\newblock Three equivalent methods for filtering finite nonstationary time
  series.
\newblock \emph{Journal of Business and Economic Statistics}, 17(1):\penalty0
  109--117, 1999.

\bibitem[Gradzewicz et~al.(2010)Gradzewicz, Growiec, Hagemejer, and
  Popowski]{gghp}
M.~Gradzewicz, J.~Growiec, J.~Hagemejer, and P.~Popowski.
\newblock Cykl koniunkturalny w polsce - wnioski z analizy spektralnej.
\newblock \emph{Bank i Kredyt}, 41:\penalty0 41--76, 2010.

\bibitem[Hamilton(1989)]{Hamilton_89}
J.D. Hamilton.
\newblock A new approach to the economic analysis of nonstationary time series
  and the business cycle.
\newblock \emph{Econometrica}, 57(2):\penalty0 357--384, 1989.

\bibitem[Harvey and Jaeger(1993)]{harvey_93}
A.C. Harvey and A.~Jaeger.
\newblock Detrending, stylized facts and the business cycle.
\newblock \emph{Journal of Applied Econometrics}, 8:\penalty0 231--247, 1993.

\bibitem[Hicks(1950)]{Hicks_50}
J.R. Hicks.
\newblock \emph{{A Contribution to the Theory of the Trade Cycle}}.
\newblock Clarendon Press, Oxford, 1950.

\bibitem[Hodrick and Prescott(1997)]{hodrick_97}
R.J. Hodrick and E.C. Prescott.
\newblock {Postwar {U.S}. Business Cycles: An Empirical Investigation}.
\newblock \emph{Journal of Money, Credit and Banking}, 29:\penalty0 1--16,
  1997.

\bibitem[Hurd(1991)]{Hur_cor_91}
H.~Hurd.
\newblock Correlation theory of almost periodically correlated processes.
\newblock \emph{J. Multivariate Anal.}, 37:\penalty0 24--45, 1991.

\bibitem[Hurd(1989)]{hurd_rep}
H.~Hurd.
\newblock Representation of strongly harmonizable periodically correlated
  process and their covariances.
\newblock \emph{J. Multivariate Anal.}, 29:\penalty0 53--67, 1989.

\bibitem[Hurd and Miamee(2007)]{hurd_miamee}
H.L. Hurd and A.G. Miamee.
\newblock \emph{{Periodically Correlated Random Sequences: Spectral Theory and
  Practice}}.
\newblock Wiley, Hoboken, New Jersey, 2007.

\bibitem[King and Rebelo(1999)]{king_rebelo}
R.G. King and S.T. Rebelo.
\newblock Resusticating real business cycles.
\newblock \emph{In: J.B. Taylor, M. Woodford (Eds.), Handbook of
  Macroeconomics}, 1999.

\bibitem[Krolzig(1997)]{krolzig_97}
{H-M}. Krolzig.
\newblock \emph{Markov-Switching Vector Autoregressions, Modelling, Statistical
  Inference and Application to Business Cycle Analysis}.
\newblock Springer-Verlag, Berlin, 1997.

\bibitem[Lenart(2011{\natexlab{a}})]{Lenart_11_bernoulli}
\L{}. Lenart.
\newblock Asymptotic distributions and subsampling in spectral analysis for
  almost periodically correlated time series.
\newblock \emph{Bernoulli}, 17(1):\penalty0 290--319, 2011{\natexlab{a}}.

\bibitem[Lenart(2011{\natexlab{b}})]{Lenart_11_jtsa}
{\L}.~Lenart.
\newblock {Non-parametric frequency identification and estimation in mean
  function for almost periodically correlated time series}.
\newblock \emph{in preparation}, 2011{\natexlab{b}}.

\bibitem[Lenart et~al.(2008)Lenart, Le\'skow, and Synowiecki]{Len08b}
\L{}. Lenart, J.~Le\'skow, and R.~Synowiecki.
\newblock Subsampling in testing autocovariance for periodically correlated
  time series.
\newblock \emph{J. Time Ser. Anal.}, 29(6), 2008.

\bibitem[Lucas(1977)]{lucas1977}
R.E. Lucas.
\newblock Understanding business cycles.
\newblock \emph{In Brunner, Karl and Meltzer, Allan H. (eds.), Stabilization of
  the Domestic and International Economy, Carnegie-Rochester Conference Series
  on Public Policy, no. 5. Reprinted in Kydland, Finn E. (ed.), Business Cycle
  Theory, Aldershot: Edward Elgar: pp. 85-107}, pages 85--107, 1977.

\bibitem[Makridakis et~al.(1998)Makridakis, Wheelwright, and
  Hyndman]{makridakis}
S.~Makridakis, S.C. Wheelwright, and R.J. Hyndman.
\newblock \emph{Forecasting: Methods and Applications}.
\newblock Wiley, New York, 1998.

\bibitem[Maravall and del R\'{\i}o(2001)]{Maravall_01}
A.~Maravall and Ana del R\'{\i}o.
\newblock {Time aggregation and the Hodrick-Prescott filter}.
\newblock \emph{Banco de Espa\~{n}a — Servicio de Estudios Documento de Trabajo
  $n.^{o}$ 0108}, 2001.

\bibitem[Milas et~al.(2006)Milas, Rothman, and Dijk]{milas_06}
C.~Milas, P.A. Rothman, and D.~Dijk.
\newblock \emph{Nonlinear Time Series Analysis of Business Cycles}.
\newblock Elsevier, Amsterdam, 2006.

\bibitem[Mitchell(1926)]{annals}
W.C. Mitchell.
\newblock Business cycles as revealed by business annals.
\newblock \emph{In: W.L. Thorp (Ed.), Business Annals}, 1926.

\bibitem[Napolitano and Spooner(2001)]{Nap01}
A.~Napolitano and C.M. Spooner.
\newblock Cyclic spectral analysis of continuous-phase modulated signals.
\newblock \emph{IEEE Trans. Signal Process.}, 49 (1):\penalty0 30--44, 2001.

\bibitem[Osborn and Smith(1989)]{osborn}
D.~R. Osborn and J.~P. Smith.
\newblock The performance of periodic autoregressive models in forecasting
  seasonal {U.K}. consumption.
\newblock \emph{J Bus. Econ. Stat.}, 9:\penalty0 117--127, 1989.

\bibitem[Parzen and Pagano(1979)]{parzen}
E.~Parzen and M.~Pagano.
\newblock An approach to modeling sezonally stationary time-series.
\newblock \emph{J. Econometrics}, 9:\penalty0 137--153, 1979.

\bibitem[Politis et~al.(1999)Politis, Romano, and Wolf]{Pol}
D.~Politis, J.~Romano, and M.~Wolf.
\newblock \emph{Subsampling}.
\newblock Springer-Verlag, New York, 1999.

\bibitem[Priestley(1981)]{priestley}
M.~B. Priestley.
\newblock \emph{Spectral Analysis and Time Series}.
\newblock Academic Press, London, 1981.

\bibitem[Serpedin et~al.(2005)Serpedin, Panduru, Sari, and
  Giannakis]{giannakis_05}
E.~Serpedin, F.~Panduru, I.~Sari, and G.~B. Giannakis.
\newblock Bibliography on cyclostationarity.
\newblock \emph{Signal Processing}, 85:\penalty0 2233 -- 2303, 2005.

\bibitem[Skrzypczy\'{n}ski(2006)]{skrzypczynski_06}
P.~Skrzypczy\'{n}ski.
\newblock Analiza synchronizacji cykli koniunkturalnych w strefie euro.
\newblock \emph{Materia\l{}y i Studia Narodowego Banku Polskiego}, Zeszyt nr
  210, 2006.

\bibitem[Skrzypczy\'{n}ski(2008)]{skrzypczynski_08}
P.~Skrzypczy\'{n}ski.
\newblock {Wahania aktywnoœci gospodarczej w {P}olsce i strefie euro}.
\newblock \emph{Materia\l{}y i Studia Narodowego Banku Polskiego}, Zeszyt nr
  227, 2008.

\bibitem[Stock and Watson(1999)]{Stock_Watson}
J.H. Stock and M.W. Watson.
\newblock Business cycle fluctuations in us macroeconomic time series.
\newblock \emph{In: J.B. Taylor, M. Woodford (Eds.), Handbook of
  Macroeconomics}, 1999.

\bibitem[Walker(1971)]{walker_71}
A.~M. Walker.
\newblock On the estimation of a harmonic component in a time series with
  stationary independent residuals.
\newblock \emph{Biometrika}, 58(1):\penalty0 21--36, 1971.

\end{thebibliography}
\clearpage
\begin{figure}
\begin{center}
\subfigure[\scriptsize{ }]
{\includegraphics[width=7.5 cm]{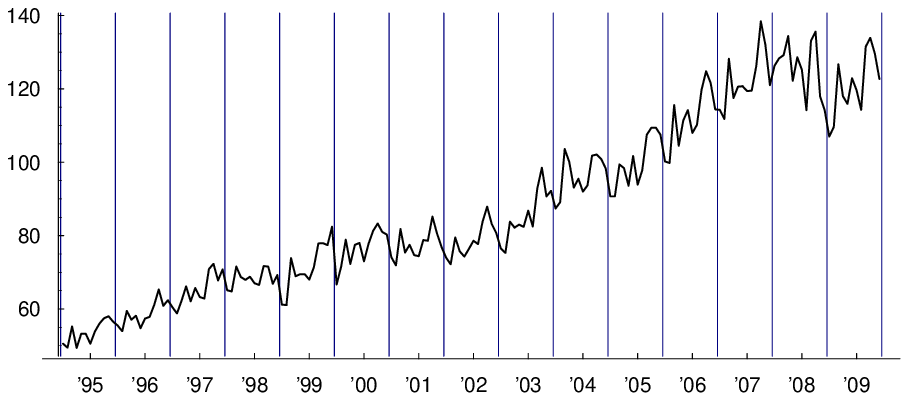}}
\subfigure[\scriptsize{ }]
{\includegraphics[width=7.5 cm]{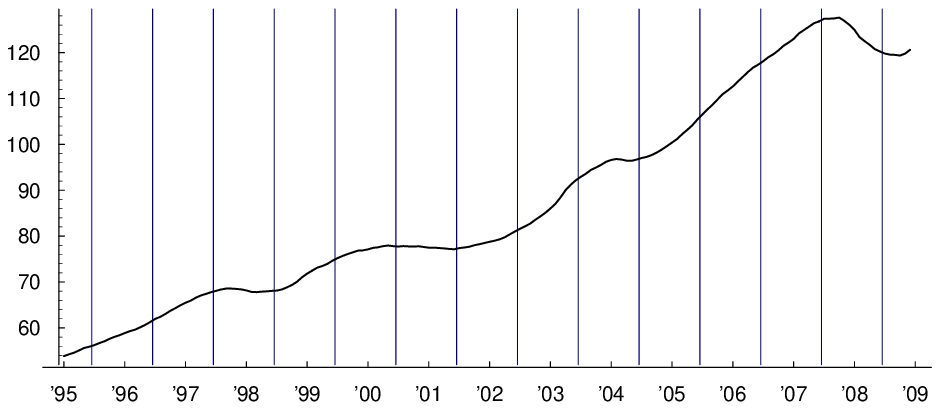}}
\end{center}
\caption{\small{(a)  Industrial production index in Poland (2005 year = 100\%) from January 1995 to December 2009; (b)
Realization of centered moving average filter \ma applied for industrial production index in Poland.}}
\label{masa0}
\end{figure}
\begin{figure}
\begin{center}
\subfigure[\scriptsize{ }]
{\includegraphics[width=7.5 cm]{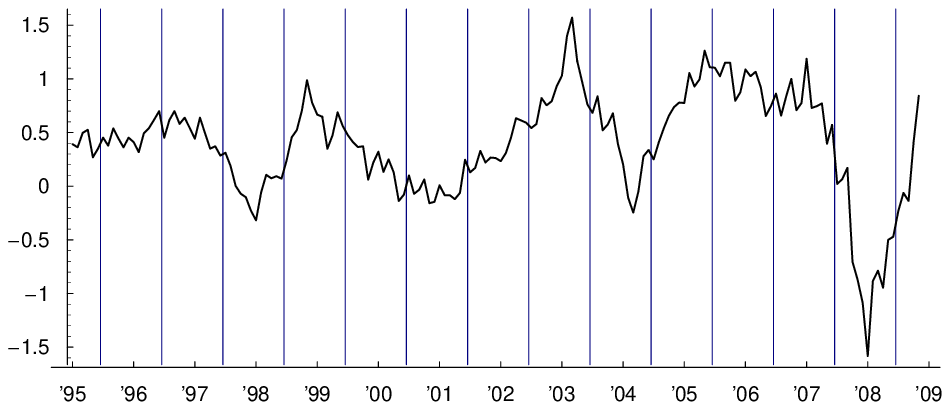}}
\subfigure[\scriptsize{ }]
{\includegraphics[width=7.5 cm]{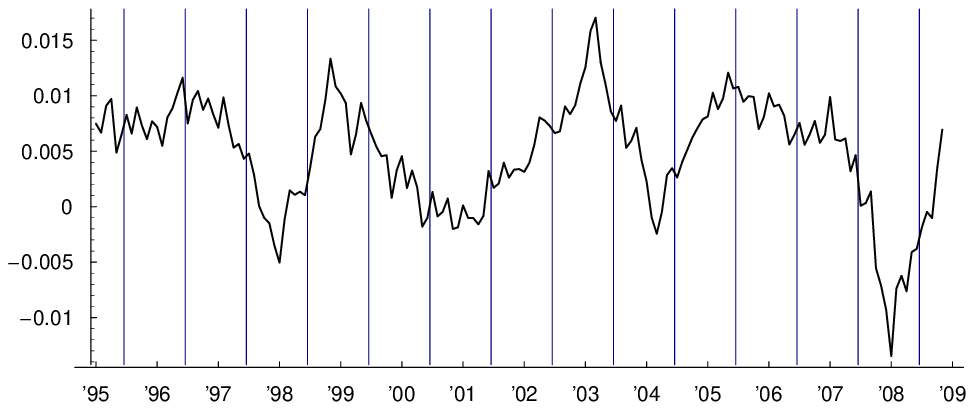}}
\end{center}
\caption{\small{(a)  First difference of centered moving
average filter \ma applied for industrial production index; (b)
First difference of centered moving
average filter \ma applied for logarithm of industrial production index.}}
\label{masa4}
\end{figure}
\begin{figure}[H]
\begin{center}
{\includegraphics[width=16 cm]{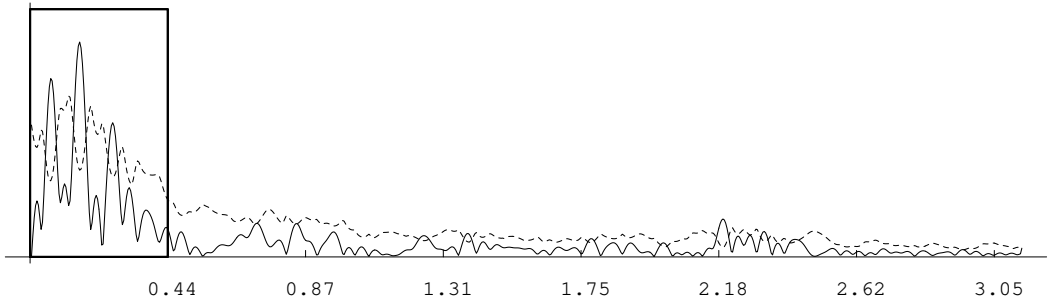}}
\end{center}
\caption{\small{Frequency identification using statistics
$\tilde \Pi_{n}(\{\psi\})=\sqrt{n}|\hat r_{n}(\psi)|$ and corresponding critical value
$\tilde c_{n,b}(0.99\%)$ for the realization of time series
$\ciag{X_{t}}$: continuous line - the value of test statistics  $\tilde \Pi_{n}(\{\psi\})=\sqrt{n}|\hat r_{n}(\psi)|$ for  $\psi$ from the set
$\{ (k-1) \pi/720: \: k=1,2,\ldots,720 \}$;  dashed line - critical value
$\tilde c_{n,b}^{\{\psi\}}(99\%)$  for  $\psi$ from the set
$\{ (k-1) \pi/720: \: k=1,2,\ldots,720 \}$.}} \label{masa5a}
\end{figure}
\begin{figure}
\begin{center}
\subfigure[\scriptsize{ }]
{\includegraphics[height=3.0 cm, width=7 cm]{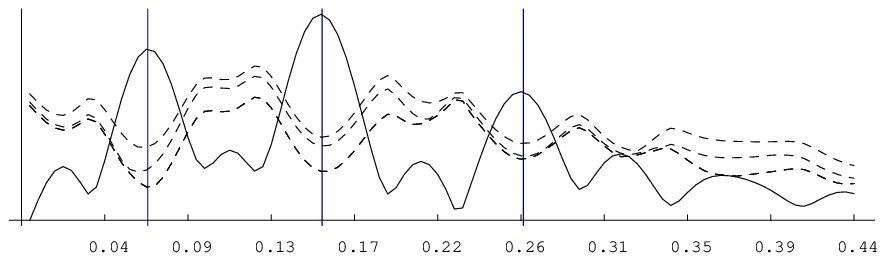}
\label{masa6a}}
\subfigure[\scriptsize{ }]
{\includegraphics[height=4.0 cm, width=7 cm]{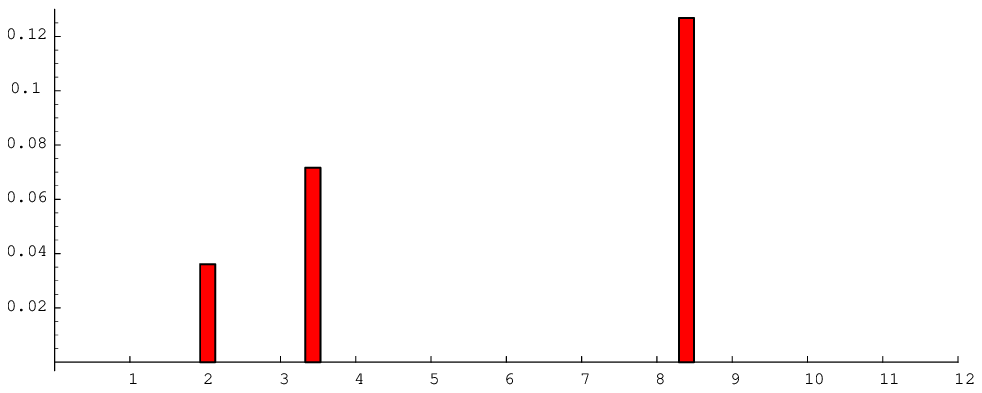}}
\end{center}
\caption{\small{Frequency identification (in the set $\Psi_{P,1}$)
and corresponding amplitude estimation: (a)  the value of test statistics  $\tilde \Pi_{n}(\{\psi\})=\sqrt{n}|\hat
r_{n}(\psi)|$  (continuous line) and critical value $\tilde c_{n,b}^{\{\psi\}}(\alpha)$
(dashed line) for  $\alpha \in \{ 92\%, 95\%, 99\% \}$ and $\psi$ from the set $\{
(k-1) \pi/720: \: k=1,2,\ldots,100 \}$; (b) estimated amplitude corresponding to
estimated frequencies from the set $\Psi_{P,1}$:  X - estimated length of the cycle,
Y - estimated amplitude.}} \label{masa6}
\end{figure}
\begin{table}
\begin{center}
{\small
\begin{tabular}{|c|c|c|c|}
  \hline
  & & & \\
  The value of frequency  & $\hat \psi_{n,1}=0.062$ & $\hat \psi_{n,2}=0.153$ & $\hat \psi_{n,3}=0.258$  \\
  estimator & & & \\
    & & & \\
  \hline
    & & & \\
  {Corresponding length} &  & &     \\
   {of the cycle}  & $8.5$  & $3.4$   &$2$    \\
     {(in years)} &   &    &    \\
       & & & \\
\hline
\end{tabular}
} \caption{\small{Estimated frequencies with corresponding length of the cycle for
industry production index in Poland.}}
\end{center}
\label{tab51}
\end{table}
\begin{figure}
\begin{center}
{\includegraphics[height=5 cm, width=15 cm]
{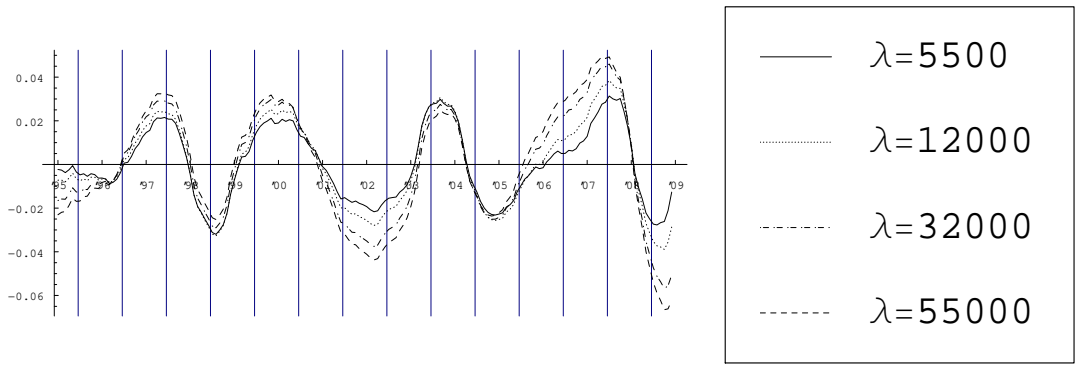}}
\end{center}
\caption{\small{Business cycle in industrial production after
logarithm and application of centered moving average filter extracted by HP filter for
 $\lambda=5\,500$ (continuous line) $\lambda=12\,000$ (dotted line)
$\lambda=32\,000$ (doted and dashed line) $\lambda=55\,000$ (dashed line).}}
\label{prod_HP}
\end{figure}

\begin{table}
\begin{center}
\scriptsize{
\begin{tabular}{|c|c|c|c|c|c|}
  \hline
 &       &   &   &   &   \\
  Expansion & ... - Dec. 97       & Feb. 99 - May 00 & Sept. 02 - Mar. 04 &
  Jun. 05 - Jan. 08 &  Apr. 09 - ...\\
   &       &   &   &   &   \\
\hline
 &       &   &   &   &   \\
  Recession   & Dec. 97 - Feb. 99 & May 00 - Sept. 02 & Mar. 04 - Jun. 05 & Jan. 08 - Apr. 09 &   \\
   &       &   &   &   &   \\
  \hline
\end{tabular}
} \end{center} \caption{\small{Expansions and recessions in industrial production
index in Poland in the period July 1995 - June 2009.}}
\label{exp_rec}
\end{table}
\begin{table}
\begin{center}
\scriptsize{
\begin{tabular}{l}
\hline
B-D\_F -  Mining and quarrying; manufacturing; electricity, gas, steam and air conditioning supply; construction \\
MIG\_ING\_CAG - MIG  Intermediate and capital goods\\
MIG\_ING - MIG -  MIG - Intermediate goods\\
MIG\_CAG - Capital goods\\
MIG\_DCOG - MIG - Durable consumer goods\\
MIG\_NDCOG - MIG - Non-durable consumer goods\\
B - Mining and quarrying\\
C - Manufacturing\\
C10-C12 - Manufacture of food products; beverages and tobacco products\\
C10\_C11 - Manufacture of food products and beverages\\
C10 -  Manufacture of food products\\
C11 -  Manufacture of beverages\\
C12 -  Manufacture of tobacco products\\
C13\_C14 - Manufacture of textiles and wearing apparel\\
C15 - Manufacture of leather and related products\\
C16 - Manufacture of wood and of products of wood and cork, except\\
    furniture; manufacture of articles of straw and plaiting materials\\
C17 - Manufacture of paper and paper products\\
C18 - Printing and reproduction of recorded media\\
C19 - Manufacture of coke and refined petroleum products\\
C20 - Manufacture of chemicals and chemical products\\
C21 - Manufacture of basic pharmaceutical products and pharmaceutical
    preparations\\
C22 - Manufacture of rubber and plastic products\\
C23 - Manufacture of other non-metallic mineral products\\
C24 - Manufacture of basic metals\\
C25 - Manufacture of fabricated metal products, except machinery and
    equipment\\
C26 - Manufacture of computer, electronic and optical products\\
C27 - Manufacture of electrical equipment\\
C28 - Manufacture of machinery and equipment n.e.c.\\
C29 - Manufacture of motor vehicles, trailers and semi-trailers\\
C29\_C30 - Manufacture of motor vehicles, trailers, semi-trailers and
    of other transport equipment\\
C31 - Manufacture of furniture; other manufacturing\\
D - Electricity, gas, steam and air conditioning supply\\
\hline
\end{tabular}
}\end{center}
\caption{\small{Categorised indices describing changes in economic activity in sectors and subsectors of industrial
production in Poland}}
\label{datasets}
\end{table}
\begin{figure}
\begin{center}
\vspace{-1 cm}
{\includegraphics[height=21 cm, width=17 cm]
{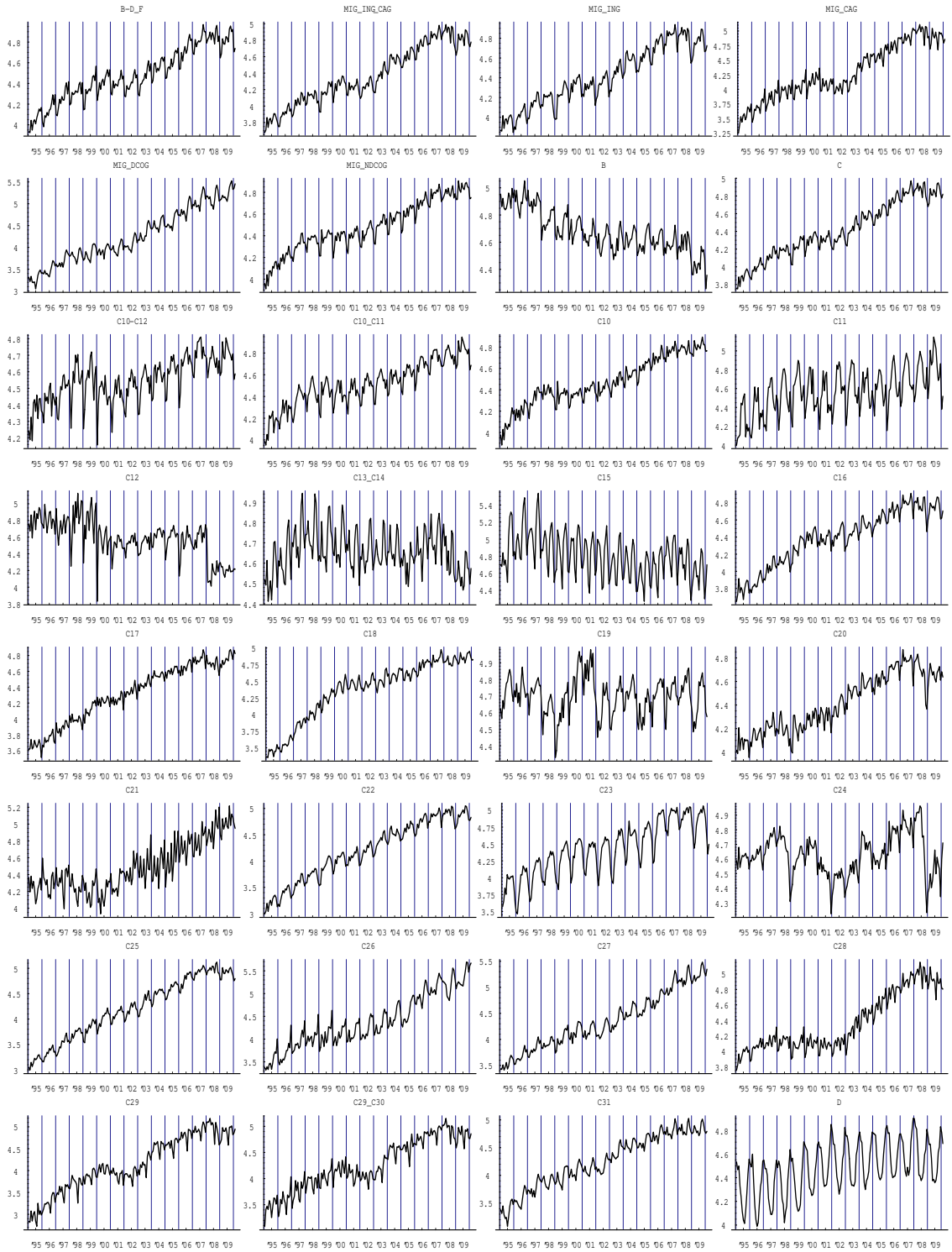}}
\end{center}
\vspace{-0.5 cm}
\caption{\small{Logarithm of industrial production
indices in Poland (2005 rok = 100\%) in sectors and subsectors from January 1995 to February 2010.}}
\label{prod-dane}
\end{figure}

\begin{figure}
\begin{center}
\vspace{-1 cm}
{\includegraphics[height=21 cm, width=17 cm]
{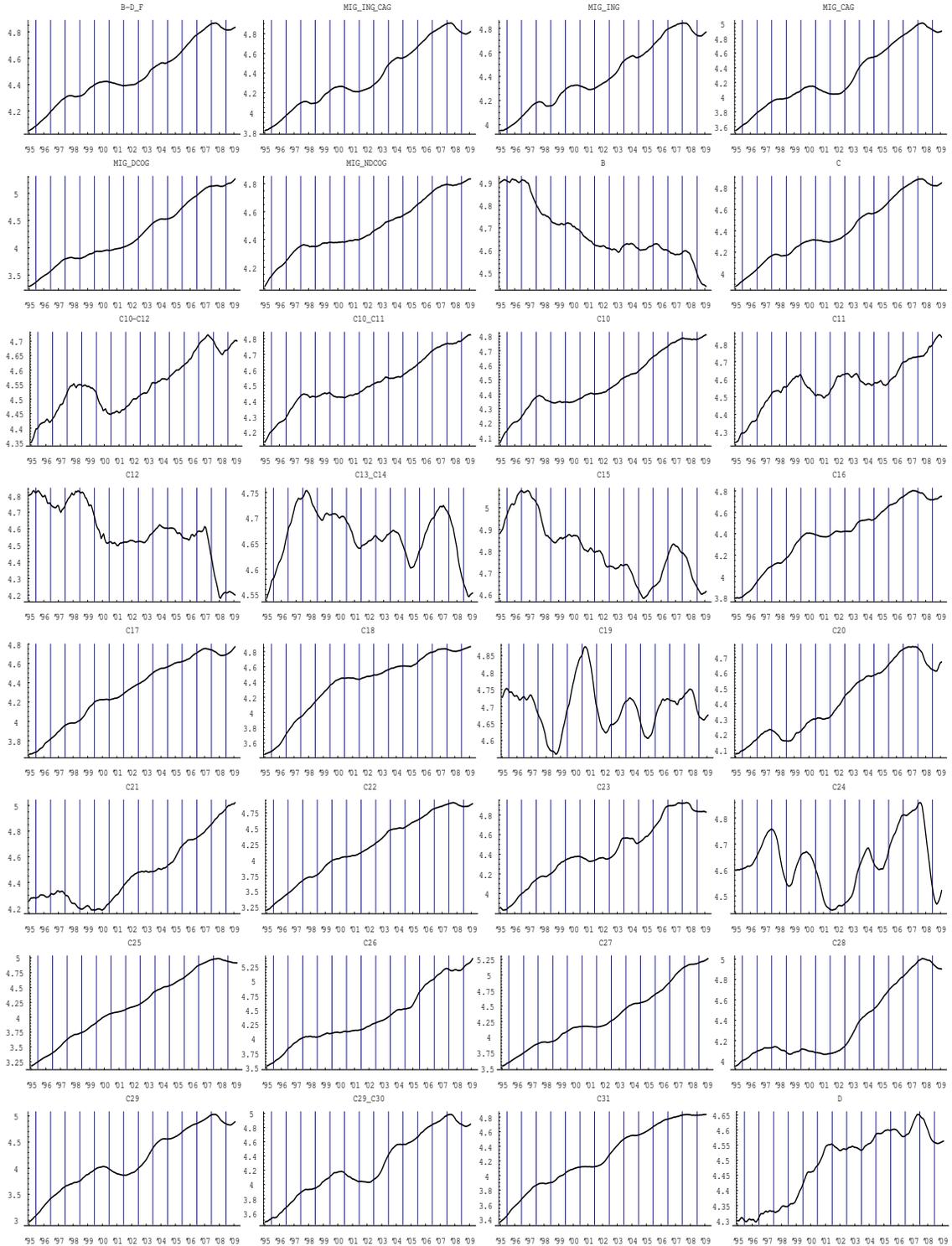}}
\end{center}
\vspace{-0.5 cm}
\caption{\small{Realizations of centered moving average filter \ma applied for logarithm of industrial production
indexes in Poland in sectors and subsectors.}}
\label{prod-2x12ma}
\end{figure}

\begin{figure}
\begin{center}
\vspace{-1 cm}
{\includegraphics[height=21 cm, width=17 cm]
{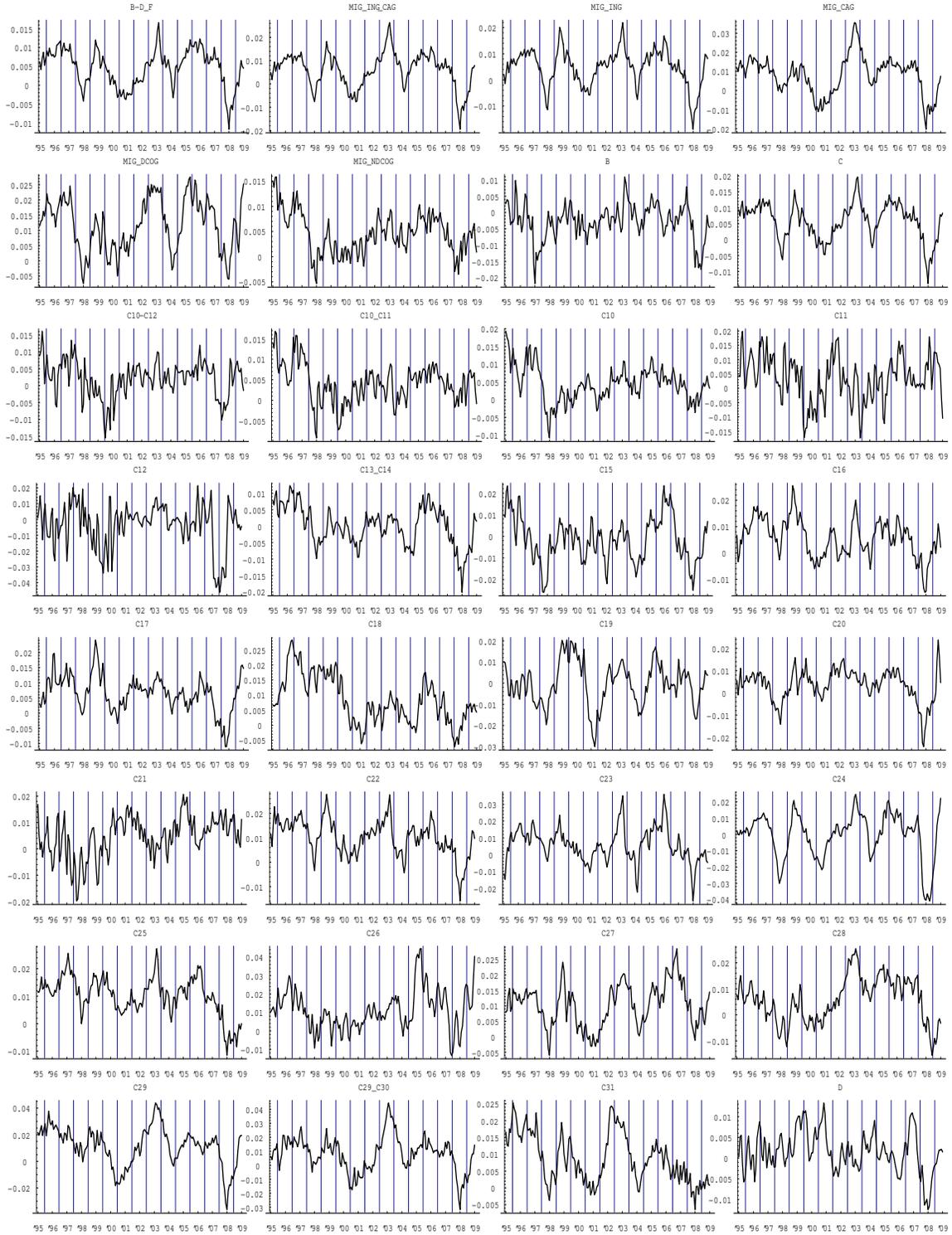}}
\end{center}
\vspace{-0.5 cm}
\caption{\small{First difference for realization of centered moving
average filter \ma applied for logarithm of industrial production
indexes in sectors and subsectors.}}
\label{prod-2x12ma-roznice}
\end{figure}

\begin{figure}
\begin{center}
{\includegraphics[height=19.5 cm, width=17 cm]
{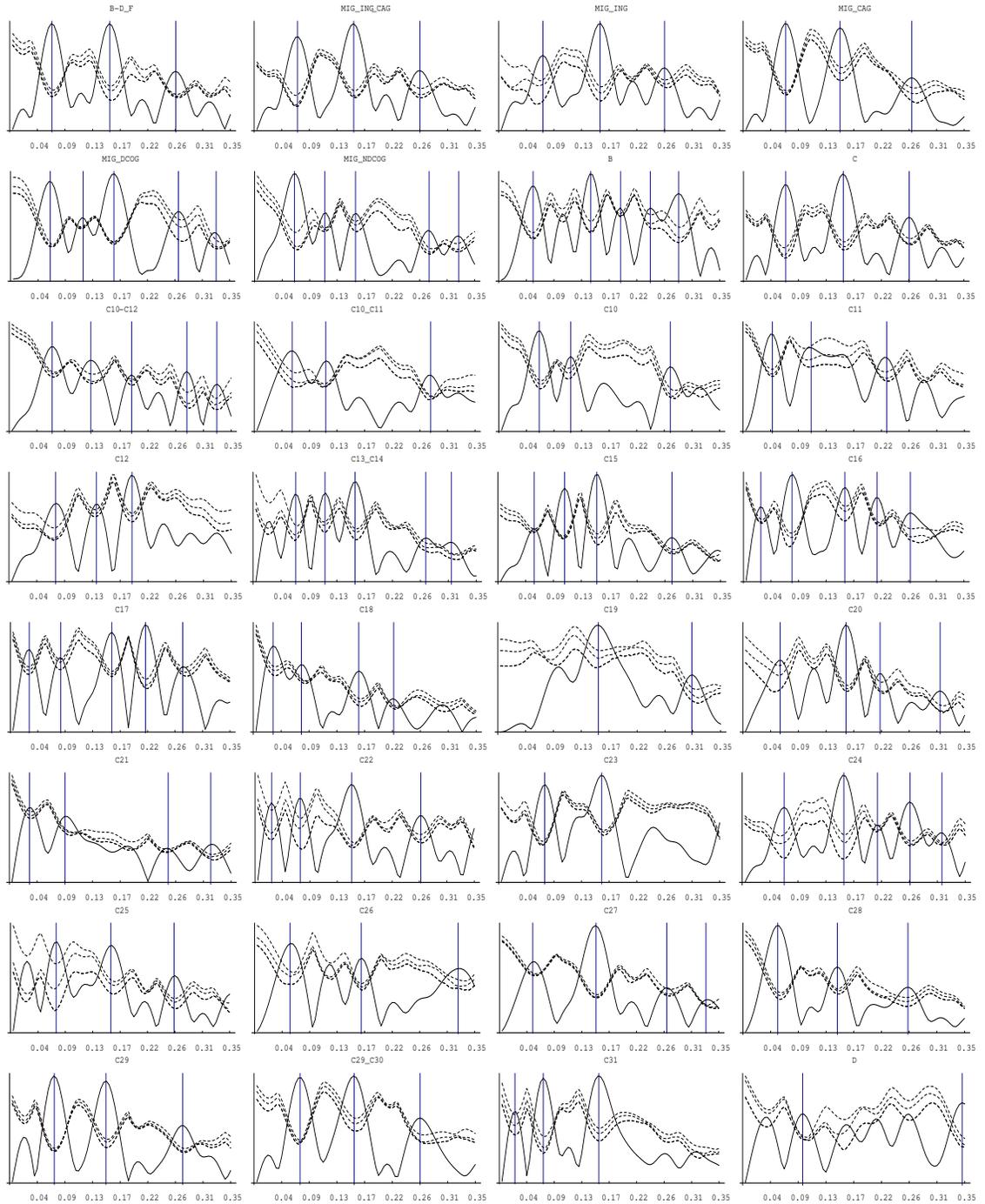}}
\end{center}
\caption{\small{Frequency identification (in the set $\Psi_{P,1}$): continuous line -  the value of test statistics  $\tilde \Pi_{n}(\{\psi\})=\sqrt{n}|\hat
r_{n}(\psi)|$, dashed line - critical value $\tilde c_{n,b}^{\{\psi\}}(\alpha)$
 for  $\alpha \in \{ 92\%, 95\%, 99\% \}$ and $\psi$ from the set $\{
(k-1) \pi/720: \: k=1,2,\ldots,100 \}$.}} \label{prod-2x12ma-roznice-test1}
\end{figure}

\begin{figure}
\begin{center}
\vspace{-1 cm}
{\includegraphics[height=19.5 cm, width=17 cm]
{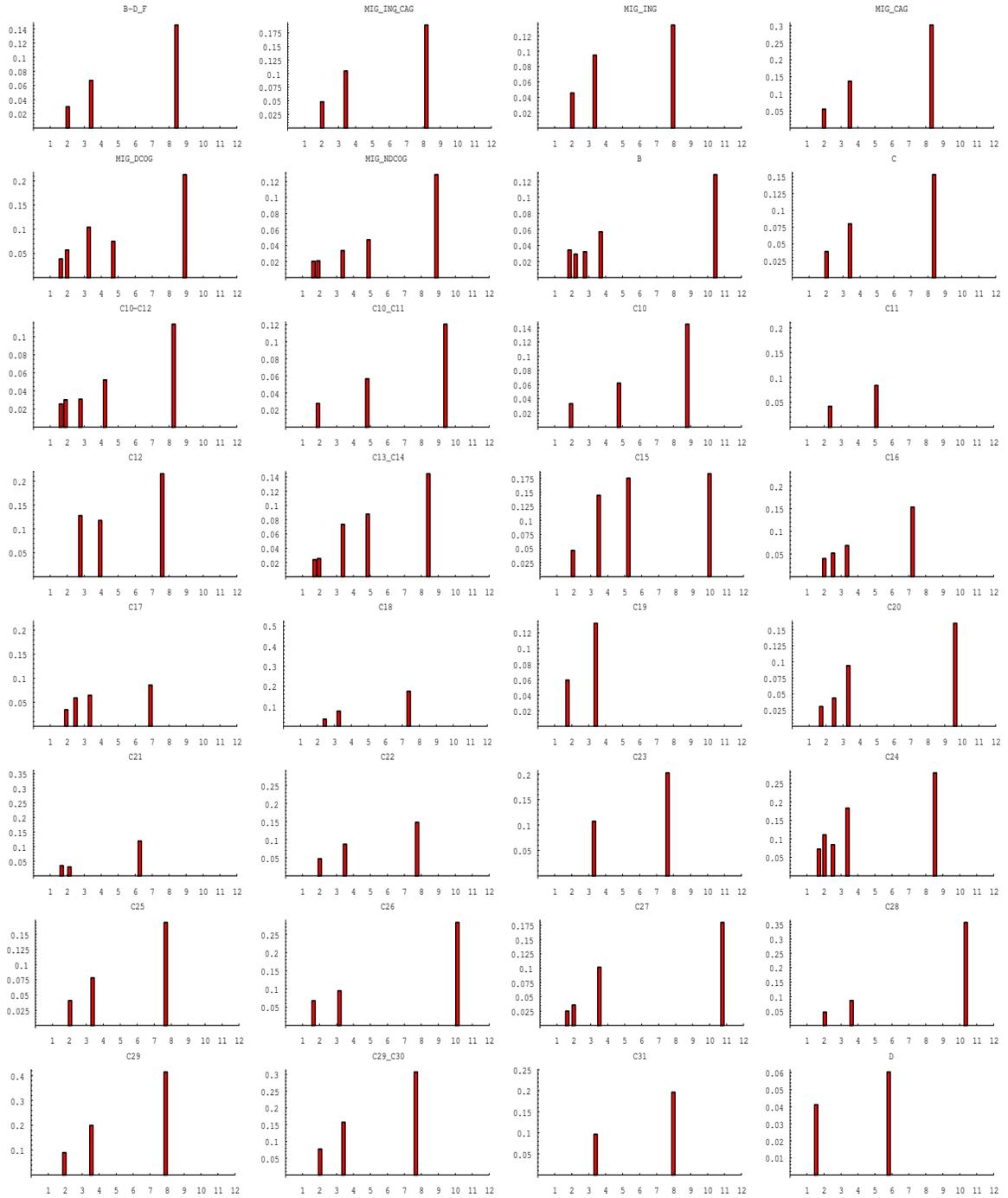}}
\end{center}
\vspace{-0.5 cm}
\caption{\small{Estimated amplitude and estimated
length of the cycles connected with identified frequencies in the set $\Psi_{P,1}$:  X -
estimated length of the cycle, Y - estimated amplitude.}}
\label{prod-2x12ma-roznice-test1-amplituda}
\end{figure}

\begin{landscape}
\begin{figure}
\begin{center}
{\includegraphics[height=10 cm, angle=0]
{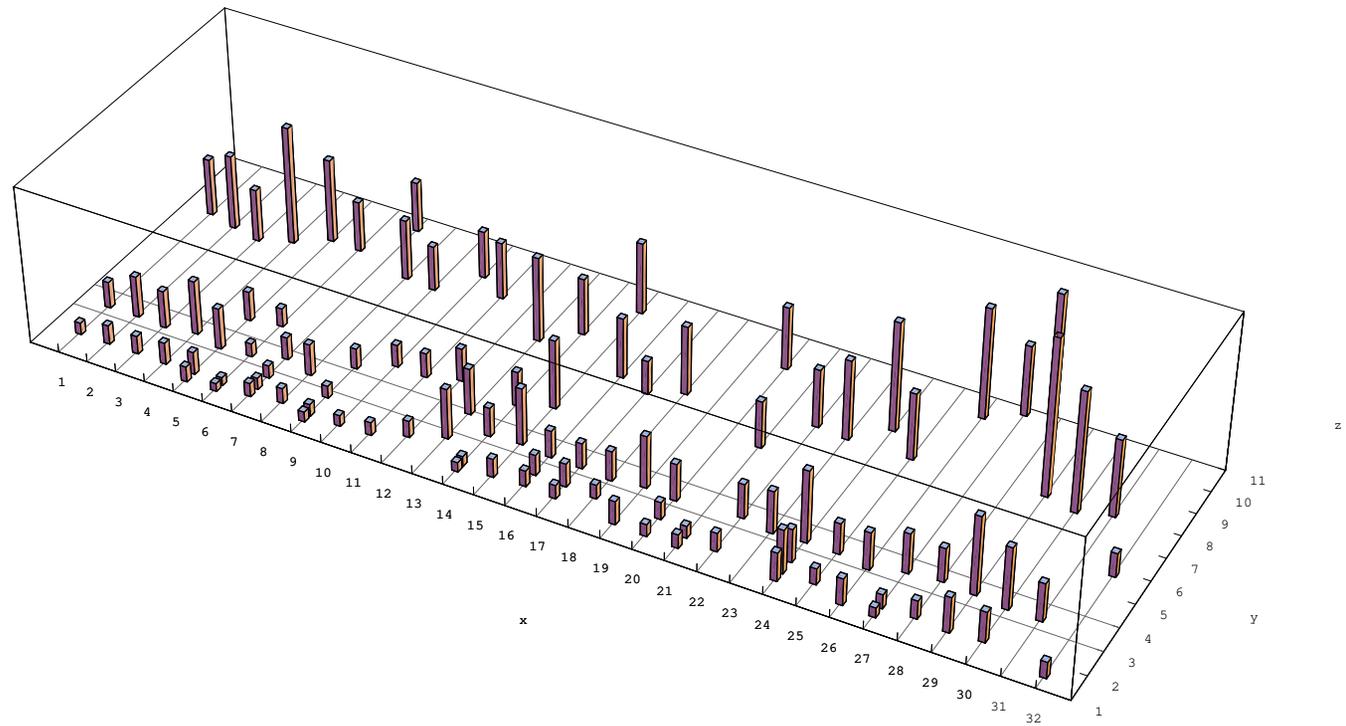}}
\end{center}
\caption{\small{Length of the cycles and amplitudes comparison:
X - the index number, Y - estimated length of the cycle, Z - estimated
amplitude.}} \label{prod-2x12ma-roznice-test1-amplitudaall}
\end{figure}
\end{landscape}

\begin{figure}
\begin{center}
\vspace{-1 cm}
{\includegraphics[height=20.5 cm, width=17 cm]
{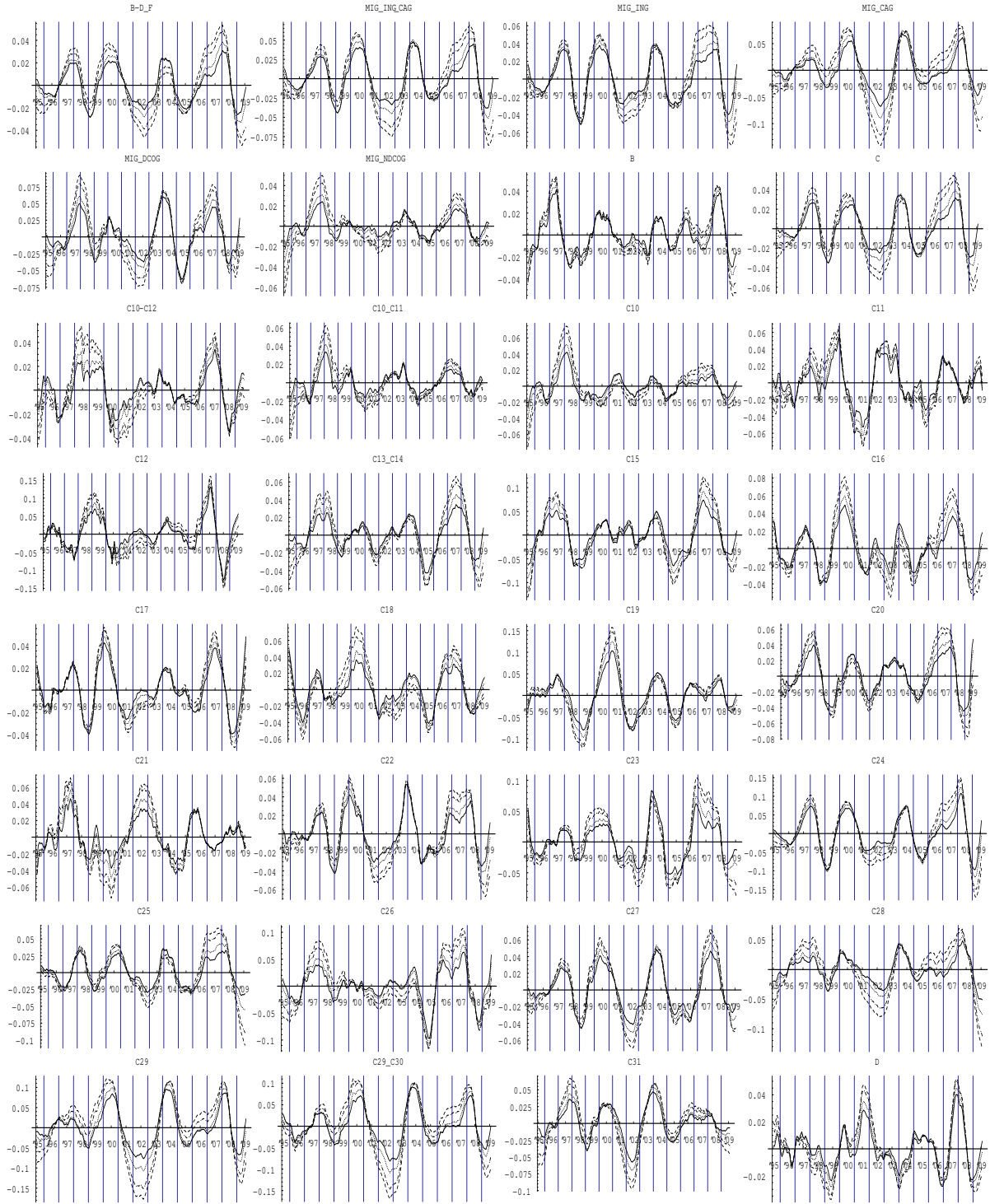}}
\end{center}
\vspace{-0.5 cm}
\caption{\small{Business cycle (extracted by HP filter) in sectors and subsectors of industrial production after
logarithm and application of centered moving average filter  for
 $\lambda=5\,500$ (continuous line) $\lambda=12\,000$ (dotted line)
$\lambda=32\,000$ (doted and dashed line) $\lambda=55\,000$ (dashed line).}}
\label{prod-2x12ma-roznice-test1-hp}
\end{figure}
\end{document}